\newtheorem{theorem}{Theorem}[section]
\newtheorem{corollary}[theorem]{Corollary}
\newtheorem{example}[theorem]{Example}
\newtheorem{remark}[theorem]{Remark}
\newtheorem{lemma}[theorem]{Lemma}
\newtheorem{definition}[theorem]{Definition}
\numberwithin{equation}{section}
\begin{document}
\title{A Gap for PPT Entanglement}
\thanks{D. Cariello was supported by CNPq-Brazil Grant 245277/2012-9.
}

\author[Cariello ]{D. Cariello}

\address{Faculdade de Matem\'atica, \newline\indent Universidade Federal de Uberl\^{a}ndia, \newline\indent 38.400-902 Ð Uberl\^{a}ndia, Brazil.}\email{dcariello@ufu.br}

\address{Departamento de An\'{a}lisis Matem\'{a}tico,\newline\indent Facultad de Ciencias Matem\'{a}ticas, \newline\indent Plaza de Ciencias 3, \newline\indent Universidad Complutense de Madrid,\newline\indent Madrid, 28040, Spain.}
\email{dcariell@ucm.es}

\keywords{}

\subjclass[2010]{}

\maketitle

\begin{abstract} 
Let $W$ be a finite dimensional vector space over a field with characteristic not equal to 2. Denote by $\text{Sym}(V)$ and $\text{Skew-Sym}(V)$ the subspaces of symmetric and skew-symmetric tensors of a subspace $V$ of $W\otimes W$, respectively. In this paper we show that if $V$ is generated by tensors with tensor rank 1, $V=\text{Sym}(V)\oplus\text{Skew-Sym}(V)$ and $W$ is the smallest vector space such that $V\subset W\otimes W$ then $\dim(\text{Sym}(V))\geq\max\{\frac{2\dim(\text{Skew-Sym}(V))}{\dim(W)}, \frac{\dim(W)}{2}\}$.

This result has a straightforward application to the separability problem in Quantum Information Theory:  If $\rho\in M_k\otimes M_k\simeq M_{k^2}$ is separable then $\text{rank}(Id+F)\rho(Id+F)\geq\text{max}\{ \frac{2}{r}\text{rank}(Id-F)\rho(Id-F), \frac{r}{2}\},$ where $F\in M_k\otimes M_k$ is the flip operator, $Id\in M_k\otimes M_k$ is the identity and $r$ is the marginal rank of $\rho+F\rho F$. We prove the sharpness of this inequality.

Moreover, we show that if $\rho\in M_k\otimes M_k$ is positive under partial transposition (PPT) and $\text{rank }(Id+F)\rho(Id+F)=1$ then $\rho$ is separable. This result follows from Perron-Frobenius theory. We also present a large family of PPT matrices satisfying $\text{rank}(Id+F)\rho(Id+F)\geq r\geq \frac{2}{r-1} \text{rank}(Id-F)\rho(Id-F)$. 

There is a possibility that an entangled PPT matrix $\rho\in M_k\otimes M_k$ satisfying $1<\text{rank}(Id+F)\rho (Id+F)<\frac{2}{r} \text{rank}(Id-F)\rho (Id-F)$ exists. However, the family referenced above shows that finding one shall not be trivial.


\end{abstract}

\section*{Introduction}

Let $W$ be a finite dimensional vector space over a field with characteristic not equal to 2. 
Let $V$ be a subspace of $W\otimes W$ and denote by $\text{Sym}(V)$ and $\text{Skew-Sym}(V)$ the subspaces of symmetric and skew-symmetric tensors of $V$, respectively.

If $V=\text{Sym}(V)\oplus\text{Skew-Sym}(V)$ and $V$ is generated by tensors with tensor rank 1 then $\dim(\text{Sym}(V))\neq 0$, since the tensor rank of every skew-symmetric tensor is not 1. Thus, we can ask the following question:
\begin{quote}
How small can the $\dim(\text{Sym}(V))$ be compared with $\dim(\text{Skew-Sym}(V))$,  if $V$ is generated by tensors with tensor rank 1 and  $V=\text{Sym}(V)\oplus\text{Skew-Sym}(V)$?
\end{quote}

This question is quite interesting for Quantum Information Theory.  Let us identify $M_k\otimes M_k$ with  $M_{k^2}$ and $\mathbb{C}^k\otimes\mathbb{C}^k$ with $\mathbb{C}^{k^2}$ via Kronecker product, where $M_n$ is the set of complex matrices of order $n$. 

One of the main problems in Quantum Information theory is discovering whether a positive semidefinite Hermitian matrix $\rho \in M_k\otimes M_k\simeq M_{k^2}$ is separable or not (see defintion \ref{definitionseparability}).  Several necessary conditions for separability are known (\cite{peres,pawel,chen,rudolph2,rudolph}). One of  these conditions is the so-called range criterion (\cite{pawel}), i.e., the range (or the image) of a separable matrix $\rho \in M_k\otimes M_k\simeq M_{k^2}$ must be generated by tensors with tensor rank 1. 

Observe that if $\rho \in M_k\otimes M_k$ is separable then the range of  $2(\rho+F\rho F)=(Id+F)\rho (Id+F)+(Id-F)\rho (Id-F)$ has the same properties of $V$ in the previous question, where $F\in M_k\otimes M_k$ is the flip operator (see defintion \ref{def1}). Thus, a solution for the previous question provides a necessary condition for the separability of $\rho$.

Here, we show that $\dim(\text{Sym}(V))\geq\frac{2}{\dim(W)}\dim(\text{Skew-Sym}(V))$,  if $V\subset W\otimes W$, $V=\text{Sym}(V)\oplus\text{Skew-Sym}(V)$ and $V$ is generated by tensors with tensor rank 1 (theorem \ref{theoremmain}). For every $W$, we give an example of $V$ such that  $\dim(\text{Sym}(V))=\frac{2}{\dim(W)}\dim(\text{Skew-Sym}(V)) $ satisfying these two conditions  (theorem \ref{theoremexample}).  Moreover, if $W$ is the smallest vector space such that $V\subset W\otimes W$ then  $\dim(\text{Sym}(V))\geq\frac{\dim(W)}{2}$.  Therefore, $\dim(\text{Sym}(V))\geq\max\{\frac{2}{\dim(W)}\dim(\text{Skew-Sym}(V)), \frac{\dim(W)}{2}\}$ (theorem \ref{theoremmain2}).

Let $\rho \in M_k\otimes M_k$ and $r$ denote the marginal rank of $\rho+F\rho F$ (see definition \ref{definitionmarginal}).

The inequality referenced above implies the following necessary condition for separability: If the range of a positive semidefinite Hermitian matrix $\rho$ is generated by tensors with tensor rank 1 then $\text{rank }(Id+F)\rho(Id+F)\geq\text{max}\{ \frac{2}{r}\text{rank }(Id-F)\rho(Id-F), \frac{r}{2}\}$ (theorem \ref{theoreminequality} and definition \ref{definitionmarginal}). We prove the sharpness of this inequality (corollary \ref{corollaryexample}).

Usually the range criterion is used when the range of a matrix does not contain  tensors with tensor rank 1 (\cite{bennett}). This inequality provides a very easy way to construct matrices whose range contains tensors with tensor rank 1, but is not generated by them (example \ref{example1}).

Another necessary condition for the separability of $\rho $ is to be positive under partial transposition (\cite{peres}). We can wonder if this inequality holds for matrices that are positive under partial transposition (PPT matrices). We are only able to prove this inequality for PPT matrices $\rho $ such that marginal rank of $\rho+F\rho F$ is smaller or equal to 3 (corollary \ref{corollaryM3}), but we obtain some partial results, which are of independent interest .

Firstly, we  prove that if $\rho$ is  positive under partial transposition  and rank $(Id+F)\rho (Id+F)=1$ then $\rho $ is separable (theorem \ref{theoremrank1}). The proof of this theorem is quite technical, and requires a theorem from the Perron-Frobenius theory and some properties of the realignment map.

One possible approach to show that $\text{rank}(Id+F)\rho (Id+F)\geq\max\{\frac{2}{r} \text{rank}(Id-F)\rho (Id-F),\frac{r}{2}\}$ for a  PPT matrix $\rho$ is to find a lower bound for the rank $(Id+F)\rho (Id+F)$. For example, we know that  the marginal ranks  of a PPT matrix (definition \ref{definitionmarginal}) are lower bounds for its rank (\cite[Theorem 1]{smolin}). Unfortunately, $(Id+F)\rho (Id+F)$ does not need to be PPT, if $\rho$ is PPT.
Nevertheless we can impose some natural conditions on $\rho$, in order to obtain the PPT property for $(Id+F)\rho (Id+F)$. 

Notice that the range of $(Id+F)\rho (Id+F)$ is a subspace of the symmetric tensors of $\mathbb{C}^k\otimes\mathbb{C}^k$. In order to be PPT, this matrix must have a symmetric Schmidt decomposition with positive coefficients (see \cite[Section 3]{guhne}). Here, we follow the nomenclature of  \cite{carielloSPC, cariello, cariello1} and we denote these matrices that have symmetric Schmidt decomposition with positive coefficients by symmetric with positive coefficients, or simply SPC matrices (definition \ref{defSPC}). These papers showed that SPC matrices have strong connections with PPT matrices even if their ranges are not subspaces of the symmetric tensors.

 Now, if we assume that $\rho$ is  PPT and SPC then  $(Id+F)\rho (Id+F)$ is PPT and $\text{rank}(Id+F)\rho (Id+F)\geq r \geq\frac{2}{r-1} \text{rank}(Id-F)\rho (Id-F)$ (theorem \ref{theoreminequalitySPC} and corollary \ref{corollaryPPTSPC}).  Thus, there are plenty of non-trivial examples of PPT matrices satisfying $\text{rank}(Id+F)\rho(Id+F)\geq\frac{2}{r} \text{rank}(Id-F)\rho(Id-F)$.

Finally, since we don't know if this inequality holds for PPT matrices, there is a possibility that a PPT matrix $\rho$ satisfying $1<\text{rank}(Id+F)\rho (Id+F)<\frac{2}{r} \text{rank}(Id-F)\rho (Id-F)$ exists. In this case $\rho$ is PPT and not separable. So this is a gap where we can look for PPT entanglement.

This paper is organized as follows: In Section 1, we prove that if a subspace $V$ of $W\otimes W$ satisfies $V=\text{Sym}(V)\oplus\text{Skew-Sym}(V)$  and $V$ is generated by tensors with tensor rank 1 then $\dim(\text{Sym}(V))\geq\frac{2}{\dim(W)}\dim(\text{Skew-Sym}(V))$ (theorem \ref{theoremmain}).  We also show that this inequality is sharp (theorem \ref{theoremexample}). Moreover, if $W$ is the smallest vector space such that $V\subset W\otimes W$ then $\dim(\text{Sym}(V))\geq\max\{\frac{2}{\dim(W)}\dim(\text{Skew-Sym}(V)), \frac{\dim(W)}{2}\}$.

In Section 2, we show that if $\rho\in M_k\otimes M_k\simeq M_{k^2}$ is separable then $\text{rank }(Id+F)\rho(Id+F)\geq\text{max}\{ \frac{2}{r}\text{rank }(Id-F)\rho(Id-F), \frac{r}{2}\}$. This inequality is also sharp (corollary \ref{corollaryexample}).

In Section 3, we prove that $\rho$ is separable, if $\rho$ is  positive under partial transposition  and rank $(Id+F)\rho (Id+F)=1$ (theorem \ref{theoremrank1}). We also show that if  $r$ is smaller or equal to 3 and $\rho$ is PPT then   $\text{rank}((Id+F)\rho (Id+F))\geq \frac{2}{r}\text{rank}((Id-F)\rho (Id-F))$ (corollary \ref{corollaryM3}).

In Section 4, we show that $(Id+F)\rho (Id+F)$ is PPT if $\rho$ is PPT and $\rho+F\rho F$ is SPC. Under these conditions, we show that $\text{rank }(Id+F)\rho (Id+F)\geq r \geq \frac{2}{r-1}\text{rank }(Id-F)\rho (Id-F)$ (theorem \ref{theoreminequalitySPC}).

\section{Main Results}

Let us begin this section with the following definition:

\begin{definition}\label{def1} Let $W$ be a finite dimensional vector space over a field with characteristic not equal to 2.  
\begin{enumerate}
\item Let $F:W\otimes W\rightarrow W\otimes W$ be the flip operator, i.e., $F(\sum_{i} a_i\otimes b_i)=\sum_{i} b_i\otimes a_i$. If $W=\mathbb{C}^k$ then $F=\sum_{i,j=1}^ke_ie_j^t\otimes e_je_i^t\in M_k\otimes M_k$, where $\{e_1,\ldots, e_k\}$ is the canonical basis of $\mathbb{C}^k$.

\item Let $M\subset W\otimes W$ and define $\text{Skew-Sym}(M)=\{w \in M|\  F(w)=-w \}$, $\text{Sym}(M)=\{w \in M|\  F(w)=w \}$. 

\item Let $v\otimes M=\{v\otimes m_1, m_1\in M\}$, $M\otimes v=\{m_2\otimes v, m_2\in M\}$ and
$v\otimes M+M\otimes v=\{v\otimes m_1+m_2\otimes v|\  m_1,m_2\in M \}$.  Notice that $\text{Skew-Sym}(v\otimes W+W\otimes v)=\{v\otimes w-w\otimes v|\  w\in W\}$.
\end{enumerate}
\end{definition}

In this section, we show that if a subspace $V$ of $W\otimes W$ is invariant under flip operator (i.e, $V=\text{Sym}(V)\oplus\text{Skew-Sym}(V)$), and generated by tensors with tensor rank 1, then $\dim(\text{Sym}(V))\geq\frac{2}{\dim(W)}\dim(\text{Skew-Sym}(V))$ (theorem \ref{theoremmain}). We also show that this inequality is sharp (theorem \ref{theoremexample}). Moreover, if $W$ is the smallest vector space such that $V\subset W\otimes W$ then  $\dim(\text{Sym}(V))\geq\max\{\frac{2}{\dim(W)}\dim(\text{Skew-Sym}(V)), \frac{\dim(W)}{2}\}$ (theorem \ref{theoremmain2}). In the next section, we provide applications to Quantum Information Theory.

In order to obtain our main theorem, we need the following two lemmas:

\begin{lemma}\label{lemma1} Let $V$ be a subspace of $W\otimes W$, where $W$ is a finite dimensional vector space over a field with characteristic not equal to 2. Let us assume that $F(V)\subset V$ and $V$ has a generating subset formed by tensors with tensor rank 1. If there is $0 \neq w_1\in W$ such that $\text{Skew-Sym}(w_1\otimes W+W\otimes w_1)\subset V$ then $\dim(\text{Sym}(V))\geq \dim(W)-1$. 
\end{lemma}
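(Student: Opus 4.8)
The plan is to prove Lemma~\ref{lemma1} by exploiting the decomposition $V = \text{Sym}(V) \oplus \text{Skew-Sym}(V)$ together with the rank-1 generation hypothesis, and to show that the presence of the large skew-symmetric subspace $\text{Skew-Sym}(w_1 \otimes W + W \otimes w_1)$ forces a correspondingly large symmetric part. First I would fix the vector $w_1$ and observe that $\text{Skew-Sym}(w_1 \otimes W + W \otimes w_1) = \{w_1 \otimes w - w \otimes w_1 \mid w \in W\}$ has dimension $\dim(W) - 1$ (the map $w \mapsto w_1 \otimes w - w \otimes w_1$ has one-dimensional kernel, namely the multiples of $w_1$). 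Since this entire space sits inside $V$, and since skew-symmetric tensors never have tensor rank $1$, the rank-$1$ generators of $V$ must supply these skew elements only as differences of symmetric combinations. The heart of the argument is therefore to extract, from each basis vector $w_1 \otimes w - w \otimes w_1$, a corresponding symmetric tensor that is also forced to lie in $V$.

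Next I would pick a generating set of $V$ consisting of rank-$1$ tensors $x_i \otimes y_i$. Because $F(V) \subset V$, applying the flip operator keeps us inside $V$, so $y_i \otimes x_i \in V$ as well, and hence both $x_i \otimes y_i + y_i \otimes x_i \in \text{Sym}(V)$ and $x_i \otimes y_i - y_i \otimes x_i \in \text{Skew-Sym}(V)$ belong to $V$. The key step is to understand how a skew tensor of the special form $w_1 \otimes w - w \otimes w_1$ decomposes in terms of these rank-$1$ generators and their flips; projecting onto $\text{Skew-Sym}(V)$ and onto $\text{Sym}(V)$ should reveal that the symmetric companions $x_i \otimes y_i + y_i \otimes x_i$ span a subspace of dimension at least $\dim(W) - 1$. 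Concretely, I expect to build an injective (or rank-preserving) correspondence between the $(\dim(W)-1)$-dimensional family $\{w_1 \otimes w - w \otimes w_1\}$ and a linearly independent family inside $\text{Sym}(V)$, for instance tensors of the form $w_1 \otimes w + w \otimes w_1$ or symmetric products that must belong to $V$ once the skew ones do.

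The cleanest route is probably to argue directly about the set $S = \{w \in W \mid w_1 \otimes w + w \otimes w_1 \in \text{Sym}(V)\}$ and show $\dim(S) \geq \dim(W) - 1$, which immediately yields $\dim(\text{Sym}(V)) \geq \dim(W) - 1$ since distinct $w$ modulo $\langle w_1\rangle$ give linearly independent symmetric tensors. To see $w \in S$, I would use that $w_1 \otimes w - w \otimes w_1 \in V$ can be written as a linear combination of the rank-$1$ generators; pairing this combination with its flip and adding yields a symmetric tensor in $V$, and I would show this symmetric tensor must (up to the span of tensors already controlled) coincide with $w_1 \otimes w + w \otimes w_1$. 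The main obstacle I anticipate is precisely this last matching: controlling the ``cross terms'' that appear when one symmetrizes the rank-$1$ expansion, i.e. ruling out that the symmetric companion collapses or mixes in a way that destroys independence. Handling this likely requires a careful choice of basis adapted to $w_1$ (splitting $W = \langle w_1 \rangle \oplus W'$) and a dimension count comparing the skew constraints against the symmetric degrees of freedom, so that the inequality $\dim(\text{Sym}(V)) \geq \dim(W) - 1$ falls out of linear independence rather than from an explicit formula.
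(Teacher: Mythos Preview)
Your proposed ``cleanest route'' --- showing that $S = \{w \in W : w_1 \otimes w + w \otimes w_1 \in \text{Sym}(V)\}$ has dimension at least $\dim(W)-1$ --- is false in general. The paper's own extremal example $V = V_k$ (Theorem~\ref{theoremexample}) with $w_1 = e_1$ is a counterexample: there $\text{Skew-Sym}(V_k)$ is all of $\text{Skew-Sym}(W\otimes W)$, so the hypothesis of the lemma is satisfied, yet $\text{Sym}(V_k) = \text{span}\{a_i + F(a_i) : 2 \le i \le k\}$ contains \emph{no} nonzero tensor of the form $e_1 \otimes w + w \otimes e_1$ (each $T(a_i + F(a_i))$ has $(i,i)$-entry equal to $2$, which forces all coefficients to vanish in any such combination). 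Hence $S = \{0\}$. More broadly, the strategy of symmetrizing a rank-$1$ expansion of $w_1 \otimes w - w \otimes w_1$ cannot work: the expansion can always be taken in flip-paired form so that its symmetrization is identically zero, and even when nonzero nothing ties the resulting symmetric tensor to $w$ in a way that guarantees linear independence as $w$ varies. You correctly flag this as ``the main obstacle,'' but it is a dead end, not a technicality to be cleaned up.

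The paper's proof is structurally different: it is an induction on $n = \dim(W)$. One picks a rank-$1$ tensor $a \otimes b \in V$ with $a \notin \text{span}\{w_1\}$, chooses a rank-$(n-1)$ projection $P : W \to W$ with $Pa = 0$ and $Pw_1 \neq 0$, and checks that $(P\otimes P)(V) \subset P(W)\otimes P(W)$ inherits all the hypotheses of the lemma relative to $Pw_1$. The inductive hypothesis then gives $\dim\text{Sym}((P\otimes P)(V)) \geq n-2$; since $P\otimes P$ maps $\text{Sym}(V)$ onto $\text{Sym}((P\otimes P)(V))$ and $0 \neq a\otimes b + b\otimes a$ lies in its kernel, one recovers the missing dimension and concludes $\dim\text{Sym}(V) \geq n-1$. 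The symmetric tensors produced along the way need not involve $w_1$ at all, which is precisely why a direct correspondence of the kind you propose cannot succeed.
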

\begin{proof}
Since $F(V)\subset V$, then $V=\text{Sym}(V)\oplus\text{Skew-Sym}(V)$. If $\dim(\text{Sym}(V))=0$, then  every element of $V$ is skew-symmetric. Therefore, the tensor rank of every element of $V$ would not be 1, which is absurd. Thus, $\dim(\text{Sym}(V))\geq 1$. 
So if $\dim(W)=2$ then $\dim(Sym(V))\geq 2-1$ and the result follows.

By induction, let us assume that this lemma is true  for $\dim(W)\leq n-1$. 
Let $\dim(W)=n>2$. 

Since $\dim(W)>2$ then $\{0\}\neq \text{Skew-Sym}(w_1\otimes W+W\otimes w_1)$ .

There is $0\neq a\otimes b\in V$ such that $a$ or $b$ is not a multiple of $w_1$, otherwise $V=\text{span}\{w_1\otimes w_1\}$ and $\text{Skew-Sym}(w_1\otimes W+W\otimes w_1)$ would not be a subset of $V$. Since $b\otimes a=F(a\otimes b)\in V$, we may assume that $a$ is not a multiple of $w_1$.

Let $P:W\rightarrow W$ be a linear transformation such that $\text{rank}(P)=n-1$, $P(a)=0$ and $Pw_1\neq 0$. Let $P\otimes P: W\otimes W\rightarrow  W\otimes W$ be the linear transformation such that $P\otimes P(v\otimes w)=Pv\otimes Pw$.

Now, $P\otimes P(V)\subset P(W)\otimes P(W)$, $\dim(P(W))=n-1$ and since $V$  is generated by tensors with tensor rank 1, then $P\otimes P(V)$ is also generated by tensors with tensor rank 1.

Next,  notice that $F(P\otimes P(V))=F(P\otimes P)F(F(V))=P\otimes P(F(V))\subset P\otimes P(V)$ and $\text{Skew-Sym}(Pw_1\otimes P(W)+P(W)\otimes Pw_1)\subset P\otimes P(\text{Skew-Sym}(w_1\otimes W+W\otimes w_1))\subset P\otimes P(V)$.
Thus, by induction hypothesis, $\dim(\text{Sym}(P\otimes P(V)))
\geq (n-1)-1=n-2$.

Finally, $0\neq a\otimes b+b\otimes a\in \text{Sym}(V)\cap \ker(P\otimes P)$ and since $P\otimes P(\text{Sym}(V))\subset\text{Sym}(P\otimes P(V))$, $P\otimes P(\text{Skew-Sym}(V))\subset\text{Skew-Sym}(P\otimes P(V))$, $V=\text{Sym}(V)\oplus\text{Skew-Sym}(V)$ then $P\otimes P:\text{Sym}(V)\rightarrow \text{Sym}(P\otimes P(V))$ is surjective, therefore $\dim(\text{Sym}(V))=\dim(\text{Sym}(P\otimes P(V)))+\dim(\ker (P\otimes P)\cap \text{Sym}(V))\geq n-2+1=n-1$.
\end{proof}

\begin{lemma}\label{lemma2}
Let $V$ be a subspace of $W\otimes W$, where $W$ is a finite dimensional vector space over a field $\mathbb{K}$ with characteristic not equal to 2 and $F(V)\subset V$. 

Let $G$ be  a generating subset of $V$ such that $F(G)=G$, the tensor rank of every element of $G$ is 1 and $\text{span}\{v|\  v\otimes w\in G\}=W$. Moreover, assume that there
exists $w_1\otimes w_2\in G$ such that if $c\otimes d\in G$ then  $0\neq w_1\otimes c- c\otimes w_1\in V$ or there exists $w_c\in W$ such that  $0\neq w_1\otimes w_c- w_c\otimes w_1\in V$ and $0\neq c\otimes w_c- w_c\otimes c\in V$.

 Then, $\dim(\text{Sym}(V))\geq \dim(W)-1$.

\end{lemma}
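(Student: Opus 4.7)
The plan is to prove Lemma \ref{lemma2} by induction on $n = \dim(W)$, modelled on the proof of Lemma \ref{lemma1}, applying Lemma \ref{lemma2} itself as the induction hypothesis in a codimension-one projected setting. The base cases $n \leq 2$ are handled by the observation that $w_1 \otimes w_2 + w_2 \otimes w_1 \in \text{Sym}(V)$ is nonzero (a nonzero rank-one tensor is never skew-symmetric), which gives $\dim(\text{Sym}(V)) \geq 1 \geq n-1$.

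For the inductive step with $n \geq 3$, I would select $c \otimes d \in G$ with $c \notin \text{span}\{w_1, w_2\}$---such an element exists since the first factors of $G$ span $W$---and construct a linear map $P : W \to W$ of rank $n-1$ with $Pc = 0$, $Pw_1 \neq 0$, and $Pw_2 \neq 0$. Setting $V' = (P \otimes P)(V) \subset P(W) \otimes P(W)$ and $G' = (P \otimes P)(G) \setminus \{0\}$, the arguments from the proof of Lemma \ref{lemma1} show that $V'$ is flip-invariant, is generated by the rank-one set $G'$, and contains the pushed-forward pivot $Pw_1 \otimes Pw_2$. The element $c \otimes d + d \otimes c$ lies in $\text{Sym}(V) \cap \ker(P \otimes P)$ and is nonzero, so combining the surjectivity of $P \otimes P : \text{Sym}(V) \to \text{Sym}(V')$ with the induction hypothesis $\dim(\text{Sym}(V')) \geq (n-1) - 1 = n - 2$ yields $\dim(\text{Sym}(V)) \geq (n-2) + 1 = n - 1$.

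The principal technical difficulty is verifying that the hypotheses of Lemma \ref{lemma2} survive in the reduced setting---namely, that the case-(A)/case-(B) dichotomy holds for every $Pc_0 \otimes Pd_0 \in G'$ with respect to the new pivot $Pw_1$, and that the spanning condition on first factors of $G'$ is preserved. The natural approach is to push forward via $P \otimes P$ the witnesses supplied by the original dichotomy, which transfers cleanly as long as the pushed-forward vectors remain nonzero and linearly independent from $Pw_1$. The delicate situations are those where the witness $w_{c_0}$ for case (B) satisfies $Pw_{c_0} = 0$ (this forces $w_{c_0}$ to be a nonzero multiple of $c$, and expanding the two defining relations of case (B) then places $c$ itself into case (A) of the original hypothesis, providing a substitute witness), where pushed vectors become linearly dependent in $P(W)$, or where some first factor of $G$ has all of its $G$-partners in $\ker P$. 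Handling these boundary cases is the main technical content of the argument.
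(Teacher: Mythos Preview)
Your inductive skeleton is right, but the choice of the projection $P$ is where the argument breaks. You kill an essentially arbitrary $c\notin\operatorname{span}\{w_1,w_2\}$, whereas the paper kills a vector $a$ chosen \emph{outside} $\operatorname{span}\{w_1,m_1,\dots,m_l\}$, where $\{w_1\otimes m_i-m_i\otimes w_1\}$ is a basis of $\text{Skew-Sym}(w_1\otimes W+W\otimes w_1)\cap V$; this makes $P$ the identity on that subspace, so that $P\otimes P$ fixes every skew element of $V$ of the form $w_1\otimes x-x\otimes w_1$. This is exactly what is needed to push the dichotomy forward: any case-(B) witness $w_{c_0}$ satisfies $0\neq w_1\otimes w_{c_0}-w_{c_0}\otimes w_1\in V$, hence $w_{c_0}\in\operatorname{span}\{w_1,m_1,\dots,m_l\}$ and $Pw_{c_0}=w_{c_0}$ automatically. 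With your generic $P$ this fails, and your proposed rescue does not work: when $Pw_{c_0}=0$ you correctly deduce $w_{c_0}\in\operatorname{span}\{c\}$ and hence $w_1\otimes c-c\otimes w_1\in V$, but this only says that $c$ satisfies case (A)---it does \emph{not} produce a witness for $Pc_0$ in $V'$, since every element you have exhibited involving $c$ is annihilated by $P\otimes P$.

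The spanning condition is a second genuine gap, not a boundary case. In general $W'=\operatorname{span}\{Pv:Pv\otimes Pw\in G'\}$ need not equal $P(W)$: if several independent vectors $b_1,\dots,b_s$ have $b_i\otimes a\in V$, then all the corresponding elements of $G$ are killed by $P\otimes P$, and $W'$ can drop further. The paper handles this by proving $\dim W'\geq n-s$ where $s=\dim((W\otimes a)\cap V)$, and simultaneously exhibiting $s$ independent symmetric elements $b_i\otimes a+a\otimes b_i$ in $\ker(P\otimes P)\cap\text{Sym}(V)$; the two contributions together give $\dim\text{Sym}(V)\geq s+(\dim W'-1)\geq n-1$. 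Your argument implicitly assumes $s=1$ and $\dim W'=n-1$, which is not justified. The proof of $\dim W'\geq n-s$ is in fact the longest part of the paper's argument and relies again on the specific choice of $P$.
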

\begin{proof} 
Since $F(V)\subset V$, then $V=\text{Sym}(V)\oplus\text{Skew-Sym}(V)$. If $\dim(\text{Sym}(V))=0$, then every element of $V$ is skew-symmetric. Therefore, the tensor rank of every element of $V$ would not be 1, which is absurd. Thus, $\dim(\text{Sym}(V))\geq 1$. 
If $\dim(W)=2$ then $\dim(\text{Sym}(V))\geq 2-1$ and the result follows.

By induction, let us assume that this lemma is true  if $\dim(W)\leq n-1$ and
let $\dim(W)=n>2$.

There is $e\otimes f\in G$ such that $e\notin\text{span}\{ w_1\}$, since $\text{span}\{v|\ v\otimes w\in G\}=W$, then 
$0\neq w_1\otimes e- e\otimes w_1\in V$ or there exists $w_e\in W$ such that  $0\neq w_1\otimes w_e- w_e\otimes w_1\in V$. So $\text{Skew-Sym}(w_1\otimes W+W\otimes w_1)\cap V\neq \{0\}$.

If $\text{Skew-Sym}(w_1\otimes W+W\otimes w_1)\subset V$ then $\dim(\text{Sym}(V))\geq n-1$, by lemma \ref{lemma1}. Let us assume that $\text{Skew-Sym}(w_1\otimes W+W\otimes w_1)$ is not contained in $V$.

Thus, $\text{Skew-Sym}(w_1\otimes W+W\otimes w_1)\cap V=\text{span}\{w_1\otimes m_1- m_1\otimes w_1,\ldots, w_1\otimes m_l- m_l\otimes w_1\}$ and $\text{span}\{w_1,m_1,\ldots,m_l\}\neq W$. Thus, there is $a\otimes b_1\in G$ such that $a\notin \text{span}\{w_1,m_1,\ldots,m_l\}$.

Let $P:W\rightarrow W$ be  a linear transformation such that $P|_{\text{span}\{w_1,m_1,\ldots,m_l\}} \equiv Id$  and $\ker(P)=\text{span}\{a\}$. Consider also the linear transformation $P\otimes P:W\otimes W\rightarrow W\otimes W$ and notice also that if $z\in \text{Skew-Sym}(w_1\otimes W+W\otimes w_1)\cap V$ then $P\otimes P(z)=z$.

Let $V'=P\otimes P(V)$, $G'=\{Pv\otimes Pw |\  Pv\otimes Pw\neq 0 \text{ and } v\otimes w \in G \}$
and $W'=\text{span}\{P(v)|\  Pv\otimes Pw \in G' \}$. Notice that $\text{Skew-Sym}(w_1\otimes W+W\otimes w_1)\cap V\subset V'$, since $P\otimes P(z)=z$ for every $z\in \text{Skew-Sym}(w_1\otimes W+W\otimes w_1)\cap V$.

 Notice that $G'$ is a generating set of $V'=P\otimes P(V)$, $F(G')=G'$ and $V'\subset W'\otimes W'$.  Recall that $W'$ is a subset of the image of $P$ then $\dim(W')\leq n-1$. Now, $F(V')=F(P\otimes P)(V)=(P\otimes P)F(V)\subset (P\otimes P)(V)=V'$.   Notice also the tensor rank of every element of $G'$ is $1$.\\\\
In order to complete this proof, we must show that
\begin{enumerate}
\item  $G'$ satisfies the last property of $G$ in the hypothesis of this theorem and 
\item  if $(W\otimes a)\cap V=\{w\otimes a|\  w\in W\text{ and } w\otimes a\in V\}$ has dimension $s$ then $\dim(W')\geq n-s$. 
\end{enumerate}

Therefore, by induction hypothesis, $\dim(\text{Sym}(P\otimes P(V)))\geq \dim(W')-1$.

Since $P\otimes P(\text{Sym}(V))\subset\text{Sym}(P\otimes P(V))$, $P\otimes P(\text{Skew-Sym}(V))\subset\text{Skew-Sym}(P\otimes P(V))$, $V=\text{Sym}(V)\oplus\text{Skew-Sym}(V)$ then $P\otimes P:\text{Sym}(V)\rightarrow \text{Sym}(P\otimes P(V))$ is surjective. Let $\{b_1\otimes a,\ldots,b_s\otimes a\}$ be a basis of $(W\otimes a)\cap V$. Notice that $\{b_1\otimes a+a\otimes b_1, \ldots, b_s\otimes a+a\otimes b_s\}$ is a linear independent set and $\{b_1\otimes a+a\otimes b_1, \ldots, b_s\otimes a+a\otimes b_s\}\subset \ker(P\otimes P)\cap \text{Sym}(V)$ . Finally, $\dim(\text{Sym}(V))=\dim(\ker(P\otimes P)\cap \text{Sym}(V))+\dim(\text{Sym}(P\otimes P(V)))\geq s+\dim(W')-1\geq s+n-s-1=n-1$.\\\\
\textit{Proof of $(1):$}

Since $w_1\otimes w_2\in V$ then $w_1\otimes w_2-w_2\otimes w_1\in \text{Skew-Sym}(w_1\otimes W+W\otimes w_1)\cap V$. Thus, $w_2\in \text{span}\{w_1,m_1,\ldots,m_l\}$ and $P(w_2)=w_2$. So $Pw_1\otimes Pw_2=w_1\otimes w_2\in G'$. 

Now, let $Pc\otimes Pd\in G'$, where $c\otimes d\in G$. So $Pc\neq 0$, by definiton of $G'$.  If $Pw_1\otimes Pc-Pc\otimes Pw_1=0$ then
 $Pc=\lambda Pw_1=\lambda w_1$, $0\neq\lambda\in\mathbb{K}$ (since $Pc\neq 0$). Since $0\neq \text{Skew-Sym}(w_1\otimes W+W\otimes w_1)\cap V\subset V'$, there is $r\in W$ such that $0\neq w_1\otimes r-r\otimes w_1\in V'$. Thus, $0\neq \lambda(w_1\otimes r-r\otimes w_1)=Pc\otimes r-r\otimes Pc\in V'$. Notice that 
$0\neq w_1\otimes r-r\otimes w_1\in V'\subset  W'\otimes W'$. Thus, $r\in W'$.

Next, if  $0\neq Pw_1\otimes Pc-Pc\otimes Pw_1\notin V'=P\otimes P(V)$ then   $0\neq w_1\otimes c-c\otimes w_1\notin V$. Thus, there exists $w_c\in W$ such that  $0\neq w_1\otimes w_c- w_c\otimes w_1\in V$ and $0\neq c\otimes w_c- w_c\otimes c\in V$.


Notice that $0\neq w_1\otimes w_c- w_c\otimes w_1=P\otimes P(w_1\otimes w_c- w_c\otimes w_1)\in V'\subset W'\otimes W'$. Hence, $0\neq Pw_c\in W'$.


Now, if $0=Pc\otimes Pw_c-Pw_c\otimes Pc$ then $Pc=\mu Pw_c$, $0\neq\mu\in\mathbb{K}$ (since $Pc\neq 0$ and $Pw_c\neq 0$). So $Pw_1\otimes Pc-Pc\otimes Pw_1=\mu(Pw_1\otimes Pw_c- Pw_c\otimes Pw_1)\in V'$, which is a contradiction. Therefore, $0\neq Pc\otimes Pw_c- Pw_c\otimes Pc\in V'$ $($since $c\otimes w_c- w_c\otimes c\in V)$ and $0\neq w_1\otimes Pw_c- Pw_c\otimes w_1\in V'$ $($since  $w_1\otimes w_c- w_c\otimes w_1\in V)$.

Thus, we have proved that there exists $w_1\otimes w_2\in G'$ such that if $Pc\otimes Pd\in G'$ then $0\neq w_1\otimes Pc-Pc\otimes w_1\in V'$ or there exists $r\in W'$ such that $0\neq w_1\otimes r-r\otimes w_1\in V'$ and $0\neq Pc\otimes r-r\otimes Pc\in V'$. The proof of $(1)$ is complete.\\\\
\textit{Proof of $(2):$}

Let $\{b_1\otimes a,\ldots,b_s\otimes a\}$ be a basis of $(W\otimes a)\cap V$ and recall that $a\otimes b_1\in G$, so $b_1\otimes a=F(a\otimes b_1)\in G$ too. Since $\text{span}\{v|\  v\otimes w\in G\}=W$ then there exists $\{b_{s+1}\otimes a_{s+1},\ldots,b_{n}\otimes a_n\}\subset G$ such that $\{b_1,\ldots,b_n\}$ is basis for $W$.

Observe that  if $v'\otimes w'\in W\otimes W$ is such that $Pv'\otimes Pw'=0$ then $v'\in\text{span}\{a\}$ or $w'\in\text{span}\{a\}$. Moreover, if $v'\otimes w'\in G$ and $v'\in\text{span}\{a\}$ then $w'\otimes v'=F(v'\otimes w')\in (W\otimes a)\cap V$ and  $w'\in\text{span}\{b_1,\ldots,b_s\}$. Now, if $w'\in\text{span}\{a\}$ then $v'\otimes w'\in (W\otimes a)\cap V$ and  $v'\in\text{span}\{b_1,\ldots,b_s\}$. So if $Pv'\otimes Pw'=0$ and $v'\otimes w'\in G$ then  $v'\in\text{span}\{a\}$ and $w'\in\text{span}\{b_1,\ldots,b_s\}$, or $v'\in\text{span}\{b_1,\ldots,b_s\}$ and $w'\in\text{span}\{a\}$.

Next, if $P\otimes P(b_i\otimes a_i)=0$, for some $i>s$, and $a_i\in\text{span}\{a\}$ then $b_i\in\text{span}\{b_1,\ldots,b_s\}$, which is a contradiction (since $\{b_1,\ldots b_n\}$ is a basis).   So if $P\otimes P(b_i\otimes a_i)=0$, for $i>s$, then $b_i\in\text{span}\{a\}$.

Notice that if  $a\notin\text{span}\{b_{s+1},\ldots,b_n\}$ and if $\sum_{i=s+1}^n\lambda_iPb_i=0$, for $\lambda_i\in\mathbb{K}$, then $\sum_{i=s+1}^n\lambda_ib_i \in\text{span}\{a\}$. Thus, 
$0=\lambda_{s+1}=\ldots=\lambda_n$ and  $\{Pb_{s+1},\ldots, Pb_{n}\}$ is a linear independent set. In this case, every $b_i\notin\text{span}\{a\}$, for $i>s$, thus $Pb_i\otimes Pa_i\neq 0$. So $\{Pb_{s+1}\otimes Pa_{s+1},\ldots, Pb_{n}\otimes Pa_n\}\subset G'$, $\text{span}\{Pb_{s+1},\ldots,Pb_n\}\subset W'$ and  $\dim(W')\geq n-s$.

Assume that $a\in\text{span}\{b_{s+1},\ldots,b_n\}$.

Let us prove that
there is $p\otimes q\in G$ such that $Pp\otimes Pq\neq 0$ and $p\notin\text{span}\{b_{s+1},\ldots,b_n\}$ or $q\notin\text{span}\{b_{s+1},\ldots,b_n\}$. 
Since $q\otimes p=F(p\otimes q)\in G$ then we can assume $p\notin\text{span}\{b_{s+1},\ldots,b_n\}$. So $p=\sum_{i=1}^n\mu_ib_i$, $\mu_i\in\mathbb{K}$, and there exists $\mu_i\neq 0$ for some $i\leq s$. Without loss of generality assume $\mu_1\neq 0$. Thus, $b_1\in\text{span}\{p,b_2,\ldots,b_n\}$ and $\{p,b_2,\ldots,b_n\}$ is also a basis for $W$.

 Since $\ker(P)\cap\text{span}\{p,b_{s+1},\ldots,b_n\}=\text{span}\{a\}$ then $\dim(\text{span}\{Pp,Pb_{s+1},\ldots,Pb_n\})=$\\ $\dim(\text{span}\{p,b_{s+1},\ldots,b_n\})-\dim(\ker(P)\cap\text{span}\{p,b_{s+1},\ldots,b_n\})=(n-s+1)-1=n-s$. 

Recall that, if  $P\otimes P(b_i\otimes a_i)=0$, for $i>s$, then $b_i\in\text{span}\{a\}$ and $Pb_i=0\in W'$. If $P\otimes P(b_i\otimes a_i)\neq 0$, for $i>s$, then $P\otimes P(b_i\otimes a_i)\in G'$ and $P(b_i)\in W'$. In any case, $\text{span}\{Pb_{s+1},\ldots,Pb_n\}\subset W'$. Recall that $Pp\in W'$, since $p\otimes q\in G$ and $Pp\otimes Pq\neq 0$. Thus,  $\text{span}\{Pp,Pb_{s+1},\ldots,Pb_n\}\subset W'$ and $\dim(W')\geq n-s$.

Now, assume by contradiction that there is no $p\otimes q\in G$, such that $Pp\otimes Pq\neq 0$ and $p\notin\text{span}\{b_{s+1},\ldots,b_n\}$ or $q\notin\text{span}\{b_{s+1},\ldots,b_n\}$. So for every  $p\otimes q\in G$ such that $Pp\otimes Pq\neq 0$, we have $\{p,q\}\subset\text{span}\{b_{s+1},\ldots,b_n\}$.

Notice that if $w_1\otimes b_1-b_1\otimes w_1=0$ then $b_1=\delta w_1$, $0\neq \delta\in\mathbb{K}$, and $\delta(w_1\otimes a-a\otimes w_1)=b_1\otimes a- a\otimes b_1\in V$. Since $P\otimes P(z)=z$ for every $z\in \text{Skew-Sym}(w_1\otimes W+W\otimes w_1)\cap V$ then $b_1\otimes a- a\otimes b_1=\delta(w_1\otimes a-a\otimes w_1)=P\otimes P(\delta(w_1\otimes a-a\otimes w_1))$. Since $Pa=0$ then $b_1\otimes a- a\otimes b_1=0$ and $b_1\in\text{span}\{a\}\subset\text{span}\{b_{s+1},\ldots,b_n\}$, which is a contradiction ($\{b_{1},\ldots,b_n\}$ is linear independent). Thus, $w_1\otimes b_1-b_1\otimes w_1\neq 0$.

Now, if $w_1\otimes b_1-b_1\otimes w_1\in V$ then we can write $w_1\otimes b_1-b_1\otimes w_1=\sum_{j}\lambda_j' v_j\otimes w_j+\sum_{m}\mu_m' c_m\otimes d_m$, where $\lambda_j'\in\mathbb{K}$, $\mu_m'\in\mathbb{K}$, $v_j\otimes w_j\in G\cap \ker(P\otimes P)$ and $c_m\otimes d_m \in G\setminus \ker(P\otimes P)$. By assumption, $\{c_m,d_m\}\subset\text{span}\{b_{s+1},\ldots,b_n\}$, for every $m$. Recall that, since  $v_j\otimes w_j\in G\cap \ker(P\otimes P)$ then  or $v_j\in\text{span}\{a\}$ and $w_j\in\text{span}\{b_1,\ldots,b_s\}$, or $v_j\in\text{span}\{b_1,\ldots,b_s\}$ and $w_j\in\text{span}\{a\}$.

Let $Q:W\rightarrow W$ be a linear transformation such that $Qb_i=0$, for $1\leq i\leq s$, and $Qb_i=b_i$, for $s+1\leq i\leq n$. So $0=Q\otimes Q(w_1\otimes b_1-b_1\otimes w_1)=\sum_{m}\mu_m' c_m\otimes d_m$ and $w_1\otimes b_1-b_1\otimes w_1=\sum_{j}\lambda_j' v_j\otimes w_j$, where $v_j\in\text{span}\{a\}$ or $w_j\in\text{span}\{a\}$. So $0\neq w_1\otimes b_1-b_1\otimes w_1=a\otimes r+s \otimes a$. Since $a\otimes r+s \otimes a$ is skew-symmetric then $s=-r$ and $0\neq w_1\otimes b_1-b_1\otimes w_1=a\otimes r-r \otimes a$. Thus, $a=\lambda' w_1+\lambda''b_1$, where $\{\lambda',\lambda''\}\subset\mathbb{K}$.

If  $\lambda''=0$ then $a=\lambda' w_1$ and $0=P(a)=\lambda'P(w_1)=\lambda'w_1=a$, which is a contradiction. So $0\neq \lambda''( w_1\otimes b_1-b_1\otimes w_1)=w_1\otimes (\lambda' w_1+\lambda''b_1)-(\lambda' w_1+\lambda''b_1)\otimes w_1=w_1\otimes a-a\otimes w_1\in V$. 

Next, $0=P\otimes P(w_1\otimes a-a\otimes w_1)=w_1\otimes a-a\otimes w_1$, since $P\otimes P(z)=z$ for every $z\in \text{Skew-Sym}(w_1\otimes W+W\otimes w_1)\cap V$, which is a contradiction. Thus, $w_1\otimes b_1-b_1\otimes w_1\notin V$.

Since $w_1\otimes b_1-b_1\otimes w_1\notin V$ and $b_1\otimes a\in G$ then, by the last property of $G$, there is $w_{b_1}\in W$ such that $0\neq w_1\otimes w_{b_1}-w_{b_1}\otimes w_1\in V$ and $0\neq b_1\otimes w_{b_1}-w_{b_1}\otimes b_1\in V$.

We can write $b_1\otimes w_{b_1}-w_{b_1}\otimes b_1=\sum_{j}\alpha_j v_j'\otimes w_j'+\sum_{m}\beta_m c_m'\otimes d_m'$, where $\{\alpha_j,\beta_m\}\subset\mathbb{K}$, $v_j'\otimes w_j'\in G\cap \ker(P\otimes P)$ and $c_m'\otimes d_m' \in G\setminus \ker(P\otimes P)$. We can repeat the argument above in order to obtain $a=\delta' w_{b_1}+\delta''b_1$, where $\{\delta',\delta''\}\subset\mathbb{K}$.

If $\delta'=0$ then $\delta''b_1=a\in\text{span}\{b_{s+1},\ldots,b_n\}$, which is a contradiction ($\{b_{1},\ldots,b_n\}$ is linear independent). If $\delta''=0$ then $a=\delta' w_{b_1}$ and $0\neq \delta'(w_1\otimes w_{b_1}-w_{b_1}\otimes w_1)=w_1\otimes a-a\otimes w_1\in V$, but $0=P\otimes P(w_1\otimes a-a\otimes w_1)=w_1\otimes a-a\otimes w_1$. This is a contradiction.

Thus, $\delta' w_{b_1}=a-\delta''b_1$ and $0\neq \delta'(w_1\otimes w_{b_1}-w_{b_1}\otimes w_1)=w_1\otimes (a-\delta''b_1)-(a-\delta''b_1)\otimes w_1\in V$.

We can write $w_1\otimes (a-\delta''b_1)-(a-\delta''b_1)\otimes w_1=\sum_{j}\epsilon_j v_j''\otimes w_j''+\sum_{m}\gamma_m c_m''\otimes d_m''$, where $\{\epsilon_j,\gamma_m\}\subset\mathbb{K}$, $v_j''\otimes w_j''\in G\cap \ker(P\otimes P)$ and $c_m''\otimes d_m'' \in G\setminus \ker(P\otimes P)$.

Since $P\otimes P(z)=z$ for every $z\in \text{Skew-Sym}(w_1\otimes W+W\otimes w_1)\cap V$ then $0\neq w_1\otimes (a-\delta''b_1)-(a-\delta''b_1)\otimes w_1=P\otimes P(w_1\otimes (a-\delta''b_1)-(a-\delta''b_1)\otimes w_1)=-\delta''(w_1\otimes Pb_1-Pb_1\otimes w_1)=\sum_{m}\gamma_m Pc_m''\otimes Pd_m''$.  Recall that $\{c_m'',d_m''\}\subset\text{span}\{b_{s+1},\ldots,b_n\}$, therefore $\{Pc_m'',Pd_m''\}\subset\text{span}\{Pb_{s+1},\ldots,Pb_n\}$.

Thus, $Pb_1\in\text{span}\{Pb_{s+1},\ldots,Pb_n\}$. We can write $Pb_1=\sum_{i=s+1}^n\zeta_iPb_i$, $\zeta_i\in\mathbb{K}$. Hence, $b_1-\sum_{i=s+1}^n\zeta_ib_i\in \text{span}\{a\}$, and $b_1\in\text{span}\{a, b_{s+1},\ldots,b_n\}=\text{span}\{b_{s+1},\ldots,b_n\}$, which is a contradiction. Therefore, there is $p\otimes q\in G$, such that $Pp\otimes Pq\neq 0$ and $p\notin\text{span}\{b_{s+1},\ldots,b_n\}$ or $q\notin\text{span}\{b_{s+1},\ldots,b_n\}$ and the proof is complete.
\end{proof}

\begin{corollary} \label{corollaryG} Let $V$ be a subspace of $W\otimes W$, where $W$ is a  finite dimensional vector space over a field with characteristic not equal to 2.  Let us assume that $F(V)\subset V$, $V$ has a generating subset formed by tensors with tensor rank 1. If $\text{span}\{v|\  0\neq v\otimes w\in V\}=W$ and for every $0\neq a'\otimes b' \in V$, we have  $\dim(\text{Skew-Sym}(a'\otimes W+W\otimes a')\cap V)> \frac{\dim(W)}{2}$   then $\dim(\text{Sym}(V))\geq\dim(W)-1$.
\end{corollary}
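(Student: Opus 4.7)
The plan is to reduce directly to Lemma \ref{lemma2} by verifying its third hypothesis via a dimension count. Start with a rank-$1$ generating set of $V$ and enlarge it to $G$ so that $F(G) = G$ (take the union with its flip image) and $\text{span}\{v : v \otimes w \in G\} = W$ (add finitely many more rank-$1$ tensors from $V$, which is possible by the hypothesis $\text{span}\{v : 0 \neq v \otimes w \in V\} = W$). Fix any $w_1 \otimes w_2 \in G$ to play the role of the distinguished tensor in Lemma \ref{lemma2}, and then verify the dichotomy for an arbitrary $c \otimes d \in G$.

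For the verification it is convenient to introduce the auxiliary subspace
\[
U_{a'} \;=\; \{u \in W : a' \otimes u - u \otimes a' \in V\}
\]
for each nonzero $a'$ appearing as a first factor of a rank-$1$ element of $V$. The map $u \mapsto a' \otimes u - u \otimes a'$ has kernel $\text{span}(a')$, so the hypothesis translates into $\dim(U_{a'}) > \dim(W)/2 + 1$, i.e. $\dim(U_{a'}) \geq \lfloor \dim(W)/2 \rfloor + 2$.

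The dichotomy now splits into three cases. If $c$ is a scalar multiple of $w_1$, pick any $w_c \in U_{w_1} \setminus \text{span}(w_1)$ (which exists since $\dim U_{w_1} \geq 2$); then $w_1 \otimes w_c - w_c \otimes w_1$ is nonzero and in $V$, and $c \otimes w_c - w_c \otimes c$ is a nonzero scalar multiple of it. If $w_1$ and $c$ are linearly independent and $w_1 \otimes c - c \otimes w_1 \in V$, nothing more is required. Finally, if $w_1, c$ are linearly independent but $w_1 \otimes c - c \otimes w_1 \notin V$, observe that any $\alpha w_1 + \beta c \in U_{w_1}$ with $\beta \neq 0$ would force $w_1 \otimes c - c \otimes w_1 \in V$; hence $U_{w_1} \cap \text{span}(w_1, c) = \text{span}(w_1)$ and symmetrically $U_c \cap \text{span}(w_1, c) = \text{span}(c)$, giving $U_{w_1} \cap U_c \cap \text{span}(w_1, c) = \{0\}$. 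Meanwhile $\dim(U_{w_1} \cap U_c) \geq \dim(U_{w_1}) + \dim(U_c) - \dim(W) \geq 3$, so any nonzero $w_c$ in this intersection automatically avoids $\text{span}(w_1, c)$, and both $w_1 \otimes w_c - w_c \otimes w_1$ and $c \otimes w_c - w_c \otimes c$ are nonzero elements of $V$.

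With the hypotheses of Lemma \ref{lemma2} verified, the conclusion $\dim(\text{Sym}(V)) \geq \dim(W) - 1$ follows immediately. The main delicate point is the dimension bookkeeping in the third case: the \emph{strict} inequality in the hypothesis (rather than $\geq$) is precisely what forces $\dim(U_{w_1} \cap U_c) \geq 3$, which in turn guarantees the existence of $w_c$ outside $\text{span}(w_1, c)$.
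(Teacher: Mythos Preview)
Your argument is correct and follows essentially the same route as the paper: take a flip-stable rank-$1$ generating set $G$ and verify the dichotomy hypothesis of Lemma~\ref{lemma2} by a dimension count on the skew-symmetric intersections. The paper simply takes $G$ to be \emph{all} rank-$1$ tensors in $V$ (which automatically satisfies $F(G)=G$ and spans $W$), and in your third case it works with $\text{span}\{m_1,\dots,m_u\}$ and $\text{span}\{n_1,\dots,n_t\}$ rather than your $U_{w_1},U_c$; since $w_1\notin\text{span}\{m_i\}$ and $c\notin\text{span}\{n_j\}$, any nonzero $w_c$ in the intersection of these spans is automatically independent of both $w_1$ and $c$, so the paper avoids the extra step of intersecting with $\text{span}(w_1,c)$. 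One small quibble: your closing remark overstates the role of $\dim(U_{w_1}\cap U_c)\ge 3$; once you have shown $U_{w_1}\cap U_c\cap\text{span}(w_1,c)=\{0\}$, a single nonzero vector in $U_{w_1}\cap U_c$ already suffices.
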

\begin{proof}
Let $G$ be the set of all tensors in V with tensor rank 1. Notice that $F(G)=G$, $G$ generates $V$ and $\text{span}\{v|\  v\otimes w\in G\}=\text{span}\{v|\  0\neq v\otimes w\in V\}=W$. Let $w_1\otimes w_2\in G$ and $c\otimes d\in G$. 

If $w_1\otimes c- c\otimes w_1=0$ then $c=\lambda w_1$, $\lambda\neq 0$. Since $\dim(\text{Skew-Sym}(w_1\otimes W+W\otimes w_1)\cap V)>\frac{\dim(W)}{2}$ then there is $0\neq w_1\otimes w_c'- w_c'\otimes w_1\in V$. So there is $w_c'\in W$ such that  $0\neq c\otimes w_c'- w_c'\otimes c=\lambda(w_1\otimes w_c'- w_c'\otimes w_1)\in V$.

Next, if $0\neq w_1\otimes c- c\otimes w_1\notin V$ then
let $\{w_1\otimes m_1-m_1\otimes w_1,\ldots, w_1\otimes m_u-m_u\otimes w_1\}$ be a basis of $\text{Skew-Sym}(w_1\otimes W+W\otimes w_1)\cap V$ and $\{c\otimes n_1-n_1\otimes c,\ldots, c\otimes n_t-n_t\otimes c\}$ be a basis of $\text{Skew-Sym}(c\otimes W+W\otimes c)\cap V$. Notice that $m_1,\ldots, m_u$ are linear independent and $n_1,\ldots, n_t$ are linear independent. Notice that $w_1\notin \text{span}\{m_1,\ldots, m_u\}$, otherwise $\{w_1\otimes m_1-m_1\otimes w_1,\ldots, w_1\otimes m_u-m_u\otimes w_1\}$ would not be linear independent. By the same reason $c\notin\text{span}\{n_1,\ldots, n_t\}$.

By assumption $u>\frac{\dim(W)}{2}$ and $t>\frac{\dim(W)}{2}$. Thus, $\text{span}\{m_1,\ldots, m_u\}\cap\text{span}\{n_1,\ldots, n_t\}\neq \{0\}$. Let $0\neq w_c\in \text{span}\{m_1,\ldots, m_u\}\cap\text{span}\{n_1,\ldots, n_t\}$. Notice that $w_c$ and $w_1$ are linear independent since $w_1\notin \text{span}\{m_1,\ldots, m_u\}$, so $0\neq w_1\otimes w_c-w_c\otimes w_1\in V$. Analogously we obtain $0\neq c\otimes w_c-w_c\otimes c\in V$.

Finally, $G$ satisfies the hypothesis of lemma \ref{lemma2}, therefore $\dim(\text{Sym}(V))\geq\dim(W)-1$.
\end{proof}

\begin{theorem} \label{theoremmain} Let $V$ be a subspace of $W\otimes W$, where $W$ is a  finite dimensional vector space over a field with characteristic not equal to 2.  Let us assume that $F(V)\subset V$, $V$ has a generating subset formed by tensors with tensor rank 1. Then, $\dim(\text{Sym}(V))\geq\frac{2}{\dim(W)}\dim(\text{Skew-Sym}(V))$. 
\end{theorem}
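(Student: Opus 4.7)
My plan is to argue by induction on $n:=\dim(W)$, with Corollary \ref{corollaryG} playing the decisive role. The base cases $n\le 2$ can be handled directly, since $\dim(\text{Skew-Sym}(V))\le 1$ in that range while any nonzero rank-one element of $V$ produces a nonzero symmetric element. For $n\ge 3$, I would first reduce to the case $W_0:=\text{span}\{v:\,\exists\,w,\ 0\ne v\otimes w\in V\}=W$: if $W_0\subsetneq W$, then $F$-invariance together with rank-one generation forces $V\subset W_0\otimes W_0$, and the induction hypothesis applied inside $W_0\otimes W_0$ gives $\dim(\text{Sym}(V))\ge \frac{2}{\dim W_0}\dim(\text{Skew-Sym}(V))\ge \frac{2}{n}\dim(\text{Skew-Sym}(V))$.

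Having reduced to $W_0=W$, I split into two cases depending on the hypothesis of Corollary \ref{corollaryG}. If $\dim(\text{Skew-Sym}(a\otimes W+W\otimes a)\cap V)>n/2$ for every $0\ne a\otimes b\in V$, the corollary yields $\dim(\text{Sym}(V))\ge n-1$, and the trivial bound $\dim(\text{Skew-Sym}(V))\le \binom{n}{2}$ immediately gives $\dim(\text{Sym}(V))\ge n-1\ge \frac{2}{n}\dim(\text{Skew-Sym}(V))$. Otherwise there exists $0\ne a\otimes b\in V$ with $t_K:=\dim(\text{Skew-Sym}(a\otimes W+W\otimes a)\cap V)\le n/2$. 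For such an $a$, I would choose a linear map $P:W\to W$ of rank $n-1$ with $P(a)=0$ and set $V':=(P\otimes P)(V)\subset P(W)\otimes P(W)$. Then $V'$ is $F$-invariant and rank-one generated, in an ambient space of dimension $n-1$, so the induction hypothesis provides $\dim(\text{Sym}(V'))\ge \frac{2}{n-1}\dim(\text{Skew-Sym}(V'))$.

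Exactly as in the proofs of Lemmas \ref{lemma1} and \ref{lemma2}, the map $P\otimes P$ restricts to a surjection $\text{Sym}(V)\to \text{Sym}(V')$ (and similarly for the skew parts), whose kernel is the sym (respectively skew) part of $V\cap(a\otimes W+W\otimes a)$. Writing $s_K,t_K$ for the dimensions of these kernels and $t=\dim(\text{Skew-Sym}(V))$, this yields
\begin{equation*}
\dim(\text{Sym}(V))\ \ge\ s_K\ +\ \tfrac{2}{n-1}(t-t_K).
\end{equation*}
Since $a\otimes b+b\otimes a$ is a nonzero symmetric element of $V\cap(a\otimes W+W\otimes a)$, one has $s_K\ge 1$. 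A short rearrangement shows that the target inequality $\dim(\text{Sym}(V))\ge \frac{2t}{n}$ reduces to $s_K\ge \frac{2(nt_K-t)}{n(n-1)}$; and from $t_K\le n/2$ together with $t\ge t_K$ I obtain $nt_K=(n-1)t_K+t_K\le \frac{n(n-1)}{2}+t$, so that the right-hand side is at most $1$, and $s_K\ge 1$ closes the induction.

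The step I expect to be the main obstacle is the case split itself: recognizing that the precise failure mode of Corollary \ref{corollaryG} — namely the existence of a rank-one tensor $a\otimes b\in V$ with $t_K\le n/2$ — is exactly tight enough that the lone symmetric contribution $a\otimes b+b\otimes a$ in the kernel of $P\otimes P$ is already sufficient to make the inductive inequality close. Once that observation is in place, the remaining bookkeeping is routine arithmetic.
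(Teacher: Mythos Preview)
Your proposal is correct and follows essentially the same route as the paper's proof: induction on $n=\dim(W)$, reduction to $W_0=W$, the dichotomy provided by Corollary~\ref{corollaryG}, and in the remaining case projecting by $P\otimes P$ with $\ker P=\mathrm{span}\{a\}$ and using that $a\otimes b+b\otimes a$ contributes at least one symmetric dimension in the kernel. The only cosmetic differences are that the paper weakens the inductive bound to $\frac{2}{n}$ immediately (you keep $\frac{2}{n-1}$ and rearrange, which also works) and that the paper treats the degenerate sub-case $V\subset\ker(P\otimes P)$ separately, whereas your inequality absorbs it automatically.
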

\begin{proof}
Since $F(V)\subset V$ then $V=\text{\text{Sym}}(V)\oplus\text{Skew-Sym}(V)$. If $\dim(\text{Sym}(V))=0$ then every element of $V$ is skew-symmetric. Therefore the tensor rank of every element of $V$ is not 1, which is absurd. Thus, $\dim(\text{Sym}(V))\geq 1$. 
If $\dim(W)=2$ then $1 \geq\dim(\text{Skew-Sym}(V))$ and $\dim(\text{Sym(V)})\geq \frac{2}{2}\dim(\text{Skew-Sym}(V))$ .

By induction, let us assume that this theorem is true  when $2\leq\dim(W)\leq n-1$ and
let $\dim(W)=n$.

Observe that if  $R=\text{span}\{v |\  0\neq v\otimes w\in V\}\neq W$ then $\dim(R)\leq n-1$. Since $F(V)\subset V$ and $V$ is generated by tensors with tensor rank 1 then $V\subset R\otimes R$. By induction hypothesis, $\dim(\text{Sym}(V))\geq\frac{2}{\dim(R)}\dim(\text{Skew-Sym}(V))\geq \frac{2}{n}\dim(\text{Skew-Sym}(V))$.  

Now, let us assume that $\text{span}\{v |\  0\neq v\otimes w\in V\}=W$. 

If for every $0\neq a'\otimes b' \in V$, we have  $\dim(\text{Skew-Sym}(a'\otimes W+W\otimes a')\cap V)> \frac{n}{2}$ then $\dim(\text{Sym}(V))\geq n-1$, by corollary \ref{corollaryG}. Since $\dim(\text{Skew-Sym(V)})\leq \frac{n(n-1)}{2}$ then $\dim(\text{Sym}(V))\geq \frac{2}{n}\dim(\text{Skew-Sym(V)})$. 

Next, let us assume that there is $0\neq a\otimes b\in V$ such that  $\dim(\text{Skew-Sym}(a\otimes W+W\otimes a)\cap V)\leq \frac{n}{2}$.

Let $P:W\rightarrow W$ be a linear transformation such that $\ker{P}=\text{span}\{a\}$.  Recall that $a\otimes W=\{a\otimes w|\ w\in W \}$, $ W\otimes a=\{w\otimes a|\ w\in W \}$ and $\ker(P\otimes P)=a\otimes W+W\otimes a$.

If $V\subset\ker(P\otimes P)$, since $V$ is generated by tensors with tensor rank 1,  then $V$ is generated by $((a\otimes W)\cup (W\otimes a))\cap V$. Moreover, since $F(V)\subset V$  then the linear transformations $P_1:(a\otimes W)\cap V\rightarrow \text{Sym}(V)$, $P_1(a\otimes w)=a\otimes w+w\otimes a$, and $P_2:(a\otimes W)\cap V\rightarrow \text{Skew-Sym}(V)$, $P_2(a\otimes w)=a\otimes w-w\otimes a$, are surjective. Note that $P_1$ is also injective, since the characteristic of $\mathbb{K}$ is not 2. Thus, $\dim(\text{Sym}(V))=\dim((a\otimes W)\cap V)\geq \dim(\text{Skew-Sym}(V))>\frac{2}{n}\dim(\text{Skew-Sym}(V))$, since $n>2$.

Next, assume that $0\neq P\otimes P(V)$ and 
let $W'=\text{span}\{Pv |\  0\neq Pv\otimes Pw\text{ and } v\otimes w\in V\}$ and $s=\dim(W')$. Since $0\neq P\otimes P(V)$ then $0<s\leq\text{rank}(P)=n-1$. Observe that $F(P\otimes P(V))=P\otimes P(F(V))\subset P\otimes P(V)$ and $P\otimes P(V)$ is generated by tensors with tensor rank 1. Therefore,
$P\otimes P(V)\subset W'\otimes W'$.
 
 Thus, $P\otimes P(V)$ satisfies the same conditions of $V$ and
 by induction hypothesis, $\dim(\text{Sym}(P\otimes P(V)))\geq\frac{2}{s}\dim(\text{Skew-Sym}(P\otimes P(V)))$. Consider $P\otimes P: W\otimes W \rightarrow W\otimes W$. 

Recall that $0\neq a\otimes b+ b\otimes a\in \ker(P\otimes P)\cap \text{Sym}(V)$ and  $\text{Skew-Sym}(a\otimes W+W\otimes a)\cap V=\ker(P\otimes P)\cap \text{Skew-Sym}(V)$.

Notice that, since $P\otimes P(\text{Sym}(V))\subset\text{Sym}(P\otimes P(V))$, $P\otimes P(\text{Skew-Sym}(V))\subset\text{Skew-Sym}(P\otimes P(V))$ and $V=\text{Sym}(V)\oplus\text{Skew-Sym}(V)$ then $P\otimes P:\text{Sym}(V)\rightarrow \text{Sym}(P\otimes P(V))$ and $P\otimes P:\text{Skew-Sym}(V)\rightarrow \text{Skew-Sym}(P\otimes P(V))$ are surjective.

Since $\dim(\text{Sym}(V))=\dim(\ker(P\otimes P)\cap \text{Sym}(V))+\dim(\text{Sym}(P\otimes P(V)))$ then $\dim(\text{Sym}(V)) \geq 1+\frac{2}{s}\dim(\text{Skew-Sym}(P\otimes P(V)))\geq 1+\frac{2}{n}\dim(\text{Skew-Sym}(P\otimes P(V)))$. 

Note that, $\dim(\text{Skew-Sym}(V))=\dim(\ker(P\otimes P)\cap \text{Skew-Sym}(V))+\dim(\text{Skew-Sym}(P\otimes P(V)))=\dim(\text{Skew-Sym}(a\otimes W+W\otimes a)\cap V)+\dim(\text{Skew-Sym}(P\otimes P(V)))\leq \frac{n}{2}+\dim(\text{Skew-Sym}(P\otimes P(V))$. Thus, $1+\frac{2}{n}\dim(\text{Skew-Sym}(P\otimes P(V)))\geq \frac{2}{n}\dim(\text{Skew-Sym}(V))$
 and $\dim(\text{Sym}(V))\geq \frac{2}{n}\dim(\text{Skew-Sym}(V))$.
\end{proof}

\begin{theorem} \label{theoremexample} Let $W$ be a  $k-$dimensional vector space over a field $\mathbb{K}$ with characteristic not equal to 2. There is a subspace $V$ of $W\otimes W$, such that $F(V)\subset V$, $V$ has a generating subset formed by tensors with tensor rank 1, $\text{span}\{v|\  0\neq v\otimes w\in V\}=W$ and $\dim(\text{Sym}(V))=\frac{2}{k}\dim(\text{Skew-Sym}(V))$. Thus, the inequality in theorem \ref{theoremmain} is sharp.
\end{theorem}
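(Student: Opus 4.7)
My plan is to exhibit an explicit $V$ achieving equality. Fix a basis $e_1,\ldots,e_k$ of $W$ and, for $k\geq 2$, define
\[
V \;=\; \text{span}\bigl\{(e_i-e_j)\otimes(e_i+e_j),\ (e_i+e_j)\otimes(e_i-e_j) : 1\leq i<j\leq k\bigr\}.
\]
By construction the generating set consists of rank-$1$ tensors and is preserved by $F$, so $V$ is generated by rank-$1$ tensors and $F(V)\subset V$. The first factors of these generators include every $e_i-e_j$ for $i<j$ (which together span the hyperplane $\{v\in W:\sum_r v_r=0\}$) as well as every $e_i+e_j$ for $i<j$; since $\text{char}(\mathbb{K})\neq 2$, the element $e_1+e_2$ lies outside that hyperplane, so $\text{span}\{v:0\neq v\otimes w\in V\}=W$.

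The decisive computation is
\begin{align*}
(e_i-e_j)\otimes(e_i+e_j)+(e_i+e_j)\otimes(e_i-e_j) &= 2(e_i\otimes e_i-e_j\otimes e_j),\\
(e_i-e_j)\otimes(e_i+e_j)-(e_i+e_j)\otimes(e_i-e_j) &= 2(e_i\otimes e_j-e_j\otimes e_i).
\end{align*}
Since $\text{char}(\mathbb{K})\neq 2$ and $F(V)\subset V$ forces $V=\text{Sym}(V)\oplus\text{Skew-Sym}(V)$, these two identities identify
\[
\text{Sym}(V)=\text{span}\{e_i\otimes e_i-e_j\otimes e_j : 1\leq i<j\leq k\},
\]
of dimension $k-1$, together with
\[
\text{Skew-Sym}(V)=\text{span}\{e_i\otimes e_j-e_j\otimes e_i : 1\leq i<j\leq k\},
\]
of dimension $\binom{k}{2}=k(k-1)/2$. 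Hence $\dim(\text{Sym}(V))=k-1=\tfrac{2}{k}\cdot\tfrac{k(k-1)}{2}=\tfrac{2}{k}\dim(\text{Skew-Sym}(V))$, which is the desired equality.

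There is no serious obstacle: the example is engineered so that each rank-$1$ generator simultaneously contributes one vector to a $(k-1)$-dimensional symmetric subspace and one vector to the full $\binom{k}{2}$-dimensional skew-symmetric subspace, packing every skew-symmetric tensor of $W\otimes W$ into $V$ while holding $\text{Sym}(V)$ at the minimum allowed by Theorem~\ref{theoremmain}. The only care required is to verify that no extra symmetric or skew-symmetric tensors appear beyond those listed, which is automatic from the direct-sum decomposition and the explicit form of the two identities above.
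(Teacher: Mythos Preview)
Your construction is correct and considerably more direct than the paper's. The paper builds its example inductively: it sets $a_i=e_i\otimes(e_1+2e_2+\cdots+2e_{i-1}+e_i)$ for $2\le i\le k$, defines $V_i=\text{span}\{a_2,F(a_2),\ldots,a_i,F(a_i)\}+\text{Skew-Sym}(W_i\otimes W_i)$ with $W_i=\text{span}\{e_1,\ldots,e_i\}$, and then must verify by induction---introducing auxiliary rank-$1$ tensors $r_i$ and carrying out an explicit matrix calculation---that each $V_n$ is in fact generated by rank-$1$ tensors. Your generators $(e_i\pm e_j)\otimes(e_i\mp e_j)$ are manifestly of rank $1$ from the outset, and the two displayed identities settle the symmetric and skew-symmetric parts immediately; no induction or auxiliary tensors are needed. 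Both constructions end up with $\text{Skew-Sym}(V)=\text{Skew-Sym}(W\otimes W)$ and $\dim(\text{Sym}(V))=k-1$, but the symmetric parts themselves are different subspaces (yours is spanned by the diagonal differences $e_i\otimes e_i-e_j\otimes e_j$, the paper's by the more complicated $a_i+F(a_i)$). What your approach buys is transparency and brevity; the paper's longer route does not seem to yield anything extra for this particular statement.
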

\begin{proof} Let $w_1,\ldots,w_k$ be a basis of $W$ and let $e_1,\ldots, e_k$ be the canonical basis of $\mathbb{K}^k$. Let $G:W\rightarrow \mathbb{K}^k$ be the linear transformation such that $G(w)$ is the vector of the coordinates of $w$ in the basis $w_1,\ldots,w_k$. 

Observe that $G\otimes G:W\otimes W\rightarrow \mathbb{K}^k\otimes\mathbb{K}^k$, $G\otimes G(\sum_{i}c_i\otimes d_i)=\sum_{i}G(c_i)\otimes G(d_i)$, is an isomorphism and the tensor rank of $m\in W\otimes W$ is the tensor rank of $G\otimes G(m) \in \mathbb{K}^k\otimes\mathbb{K}^k$. Notice also that $G\otimes G:\text{Sym}(W\otimes W)\rightarrow \text{Sym}(\mathbb{K}^k\otimes\mathbb{K}^k)$ and $G\otimes G:\text{Skew-Sym}(W\otimes W)\rightarrow \text{Skew-Sym}(\mathbb{K}^k\otimes\mathbb{K}^k)$ are also isomorphisms. Thus, if we can find a subspace $V$ of $\mathbb{K}^k\otimes\mathbb{K}^k$ satisfying the required properties then $(G\otimes G)^{-1}(V)$ is a subspace of $W\otimes W$ satisfying the same properties.
Now, let us construct this $V$ inside $\mathbb{K}^k\otimes\mathbb{K}^k$.

Let $M_k(\mathbb{K})$ be the set of matrices of order $k$ with coefficients in $\mathbb{K}$. Consider the linear transformation $T:\mathbb{K}^k\otimes\mathbb{K}^k\rightarrow M_k(\mathbb{K})$, $T(\sum_{i}f_i\otimes g_i)=\sum_i f_ig_i^t$. Observe that the tensor rank of $v\in \mathbb{K}^k\otimes\mathbb{K}^k$ is the rank of $T(v)$ and $T(F(v))=T(v)^t$, where $F:\mathbb{K}^k\otimes\mathbb{K}^k\rightarrow \mathbb{K}^k\otimes\mathbb{K}^k$ is the flip operator (definition \ref{def1}).

Let $W_i=\text{span}\{e_1,\ldots,e_{i}\}$ and $a_2=e_2\otimes(e_1+e_2)$, $a_3=e_3\otimes(e_1+2e_2+e_3)$,$\ldots$, $a_k=e_k\otimes(e_1+2e_2+\ldots+2e_{k-1}+e_k)$. 

Define $V_i=\text{span}\{a_2,F(a_2),\ldots, a_i, F(a_i)\}+ \text{Skew-Sym}(W_i\otimes W_i)$.
Notice that $F(V_i)\subset V_i$ and $\text{Sym}(V_i)=\text{span}\{a_2+F(a_2),\ldots, a_i+F(a_i)\}$. Notice that  $\text{span}\{a_2+F(a_2),\ldots, a_{i-1}+F(a_{i-1})\}\subset W_{i-1}\otimes W_{i-1}$ and  $a_i+F(a_i)\notin W_{i-1}\otimes W_{i-1}$, so $a_i+F(a_i)\notin \text{span}\{a_2+F(a_2),\ldots, a_{i-1}+F(a_{i-1})\}$, for every $i$. Thus, $\{a_2+F(a_2),\ldots, a_i+F(a_i)\}$ is a linear independent set and
$\dim(\text{Sym}(V_i))=i-1=\frac{2}{i}\dim(\text{Skew-Sym}(W_i\otimes W_i))=\frac{2}{i}\dim(\text{Skew-Sym}(V_i))$.
Notice also that $\text{span}\{v|\  0\neq v\otimes w\in V_i\}=\text{span}\{e_1+e_2,e_2,\ldots,e_i\}=W_i$, for $i\geq 2$.

In order to complete this proof, we must show, by induction on $i$, that each $V_i$ has a generating subset formed by tensors with tensor rank 1, and then we choose $V=V_k$. 

Notice that  $\text{Skew-Sym}(W_2\otimes W_2)=\text{span}\{e_1\otimes e_2-e_2\otimes e_1\}\subset \text{span}\{a_2,F(a_2)\}$. So $\text{span}\{a_2,F(a_2)\}$  is a generating subset of $V_2$.  By induction, let us assume that $V_{n-1}$ has a generating subset formed by tensors with tensor rank 1. 

Let $s_i=a_i+F(a_i)+\ldots+ a_n+F(a_n)$ and $r_i=(e_1+2e_2+\ldots+2e_{i-1}+e_i+\ldots+e_n)\otimes (e_i+\ldots+e_n)$, $i\geq 2$. Let us prove that $s_i=r_i+F(r_i)$.
Notice that \\\\
$T(s_i)=\begin{pmatrix}
0 & 0 & \cdots& 0 & 1 & \cdots & 1  &0&\cdots &0\\ 
0 & 0 & \cdots & 0 & 2 & \cdots & 2 & 0&\cdots &0 \\ 
\vdots & \vdots & \ddots &\vdots & \vdots & \ddots & \vdots &\vdots&\ddots &\vdots\\ 
0 & 0 & \cdots & 0 &2 & \cdots & 2 & 0&\cdots &0\\ 
1& 2 & \cdots & 2 & 2 & \cdots & 2 & 0&\cdots &0\\ 
\vdots & \vdots & \ddots & \vdots & \vdots & \ddots &\vdots & \vdots&\ddots &\vdots \\ 
1& 2 & \cdots & 2 & 2 & \cdots & 2&0&\cdots &0\\
0 & 0 & \cdots& 0 & 0 & \cdots & 0&0&\cdots &0\\
\vdots & \vdots & \ddots & \vdots & \vdots & \ddots &\vdots&\vdots&\ddots &\vdots \\ 
0 & 0 & \cdots& 0 & 0 & \cdots & 0 &0&\cdots &0
\end{pmatrix}$, $T(r_i)=\begin{pmatrix}
0 & 0 & \cdots& 0 & 1 & \cdots & 1  &0&\cdots &0\\ 
0 & 0 & \cdots & 0 & 2 & \cdots & 2 & 0&\cdots &0 \\ 
\vdots & \vdots & \ddots &\vdots & \vdots & \ddots & \vdots &\vdots&\ddots &\vdots\\ 
0 & 0 & \cdots & 0 &2 & \cdots & 2 & 0&\cdots &0\\ 
0& 0 & \cdots & 0 & 1 & \cdots & 1 & 0&\cdots &0\\ 
\vdots & \vdots & \ddots & \vdots & \vdots & \ddots &\vdots & \vdots&\ddots &\vdots \\ 
0& 0 & \cdots & 0 & 1 & \cdots & 1&0&\cdots &0\\
0 & 0 & \cdots& 0 & 0 & \cdots & 0&0&\cdots &0\\
\vdots & \vdots & \ddots & \vdots & \vdots & \ddots &\vdots&\vdots&\ddots &\vdots \\ 
0 & 0 & \cdots& 0 & 0 & \cdots & 0 &0&\cdots &0
\end{pmatrix},$
where the first $i-1$ rows and columns of $T(s_i)$ are multiples of $(0,\ldots,0,1,\ldots,1,0\ldots,0)$, the next $n-i+1$ rows and columns of $T(s_i)$ are equal to $(1,2,\ldots,2,\ldots,2,0\ldots,0)$ and the  last $k-n$ rows and columns of $T(s_i)$  are zero. Notice that $T(s_i)=T(r_i)+T(r_i)^t=T(r_i+F(r_i))$.
Thus, $s_i=r_i+F(r_i)$ and $r_i$ has tensor rank 1 for $2\leq i\leq n$.

Next, the $n^{th}$ row  of $T(r_2-F(r_2))=T(r_2)-T(r_2)^t$ is $(-1,0,\ldots,0)$, the $n^{th}$ row  of $T(r_3-F(r_3))=T(r_3)-T(r_3)^t$ is $(-1,-2,0,\ldots,0)$, $\ldots$, the $n^{th}$ row  of $T(r_n-F(r_n))=T(r_n)-T(r_n)^t$ is $(-1,-2,\ldots,-2,0,\ldots,0)$.

Hence, $\{A \in M_k(\mathbb{K})|\  A=-A^t, a_{ij}=0\text{ if }i>n-1\text{ or }j>n-1\}+\text{span}\{T(r_2-F(r_2)),\ldots,T(r_n-F(r_n))\}=\{A \in M_k(\mathbb{K})|\  A=-A^t, a_{ij}=0\text{ if }i>n\text{ or }j>n\}$. 

Since $T(\text{Skew-Sym}(W_{s}\otimes W_s))=\{A \in M_k(\mathbb{K})|\  A=-A^t, a_{ij}=0\text{ if }i>s\text{ or }j>s\}$ then $\text{Skew-Sym}(W_{n-1}\otimes W_{n-1})+\text{span}\{r_2-F(r_2), r_3-F(r_3), \ldots, r_n-F(r_n)\}=\text{Skew-Sym}(W_{n}\otimes W_n)$.

So $V_n=\text{span}\{a_2,F(a_2),\ldots, a_n, F(a_n)\}+\text{Skew-Sym}(W_n\otimes W_n)=$ 
$\text{span}\{a_2,F(a_2),\ldots,a_n, F(a_n)\}+\text{span}\{r_2-F(r_2), r_3-F(r_3), \ldots, r_n-F(r_n) \}+\text{Skew-Sym}(W_{n-1}\otimes W_{n-1})$.

Since $\text{span}\{r_2+F(r_2), r_3+F(r_3), \ldots, r_n+F(r_n) \}\subset \text{span}\{a_2,F(a_2),\ldots, a_{n-1}, F(a_{n-1}),a_n, F(a_n)\}$ then $\text{span}\{a_2,F(a_2),\ldots, a_{n-1}, F(a_{n-1}),a_n, F(a_n)\}+\text{span}\{r_2-F(r_2), r_3-F(r_3), \ldots, r_n-F(r_n) \}=\text{span}\{a_2,F(a_2),\ldots, a_{n-1}, F(a_{n-1}),a_n, F(a_n),r_2, r_3, \ldots, r_n\}$. 

Hence, $V_n=\text{span}\{a_2,F(a_2),\ldots, a_{n-1}, F(a_{n-1}),a_n, F(a_n),r_2, r_3, \ldots, r_n\}+\text{Skew-Sym}(W_{n-1}\otimes W_{n-1})$.

Finally, $V_n=V_{n-1}+\text{span}\{a_n, F(a_n),r_2, r_3, \ldots, r_n\}$. By induction hypothesis, $V_{n-1}$ has a generating set formed by tensors with tensor rank 1 then $V_{n}$ has a generating set formed by tensors with tensor rank 1. 
\end{proof}

We complete this section adding one assumption to theorem \ref{theoremmain}.  We prove that if $V$ satisfies the hypothesis of theorem \ref{theoremmain} and $\text{span}\{v|\  0\neq v\otimes w\in V\}=W$ then $\dim(\text{Sym}(V))\geq\max\{\frac{2\dim(\text{Skew-Sym}(V))}{\dim(W)}, \frac{\dim(W)}{2}\}$. Moreover, we analyze both cases: $\dim(\text{Sym}(V)
)=\frac{\dim(W)}{2}$ and $\dim(\text{Sym}(V))=\frac{2\dim(\text{Skew-Sym}(V))}{\dim(W)}$. 

\begin{theorem}\label{theoremmain2} Let $V$ be a subspace of $W\otimes W$, where $W$ is a  finite dimensional vector space over a field $\mathbb{K}$ with characteristic not equal to 2.  Let us assume that $F(V)\subset V$, $V$ has a generating subset formed by tensors with tensor rank 1. If $\text{span}\{v|\  0\neq v\otimes w\in V\}=W$ then
$\dim(\text{Sym}(V))\geq\max\{\frac{2\dim(\text{Skew-Sym}(V))}{\dim(W)}, \frac{\dim(W)}{2}\}$.
Moreover, 
\begin{itemize}
\item[a)] If $\dim(\text{Sym}(V)
)=\frac{\dim(W)}{2}$ then $\dim(\text{Skew-Sym}(V))=\dim(\text{Sym}(V))$.
\item[b)] If $\dim(\text{Sym}(V))=\frac{2\dim(\text{Skew-Sym}(V))}{\dim(W)}$ then $\dim(\text{Sym}(V))=\dim(W)-1$ and $\text{Skew-Sym}(V)=\text{Skew-Sym}(W\otimes W)$.
\end{itemize}

\end{theorem}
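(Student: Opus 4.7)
The bound $\dim(\text{Sym}(V)) \geq \tfrac{2}{n}\dim(\text{Skew-Sym}(V))$ (with $n=\dim(W)$) is exactly theorem \ref{theoremmain} and uses no spanning hypothesis. What is new here is the bound $\dim(\text{Sym}(V)) \geq n/2$ together with the equality analyses. My plan is to prove $\dim(\text{Sym}(V)) \geq n/2$ by induction on $n$, take the maximum with theorem \ref{theoremmain}, and then trace equality through the two arguments to obtain (a) and (b).

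\textbf{Inductive proof of $\dim(\text{Sym}(V)) \geq n/2$.} The cases $n \leq 2$ follow from $\dim(\text{Sym}(V))\geq 1$ (rank-1 generators cannot be skew-symmetric). For $n \geq 3$, fix any $0\neq a\otimes b \in V$ and a linear $P:W\to W$ with $\ker P = \text{span}\{a\}$. If $V \subset \ker(P\otimes P) = a\otimes W + W\otimes a$, then rank-1 generation plus $F$-invariance forces $V = a\otimes W_1 + W_1\otimes a$ for some $W_1 \subset W$; the spanning hypothesis gives $W_1 + \text{span}\{a\} = W$, so $\dim(W_1) \geq n-1$, and a direct computation shows $\dim(\text{Sym}(V)) = \dim(W_1) \geq n-1 \geq n/2$. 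Otherwise let $V' := P\otimes P(V) \neq 0$ and $W' := \text{span}\{Pv : v\otimes w \in V,\, Pv\otimes Pw \neq 0\}$. Then $V' \subset W'\otimes W'$ inherits every hypothesis, so by induction $\dim(\text{Sym}(V')) \geq \dim(W')/2$. Setting $s := \dim((W\otimes a)\cap V) \geq 1$, two auxiliary bounds close the argument:
(i) $\dim(W') \geq n - s - 1$, because $T := \{v : v\otimes a \in V\}$ (of dimension $s$) and $W'' := \text{span}\{v : \exists w\notin \text{span}\{a\},\, v\otimes w \in V\}$ satisfy $T + W'' = W$ by the spanning hypothesis, giving $\dim(W'') \geq n-s$ and $\dim(W') \geq \dim(P(W'')) \geq n-s-1$;
(ii) $\dim(\text{Sym}(V)\cap\ker(P\otimes P)) \geq s$, witnessed by the linearly independent vectors $a\otimes w_i + w_i\otimes a$ for $w_1,\ldots,w_s$ a basis of $T$.
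Combining via the surjection $P\otimes P:\text{Sym}(V) \to \text{Sym}(V')$ yields $\dim(\text{Sym}(V)) \geq s + \dim(W')/2 \geq (n+s-1)/2 \geq n/2$.

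\textbf{Equality cases.} For (b), I would re-examine the proof of theorem \ref{theoremmain}: the corollary \ref{corollaryG} branch gives $\dim(\text{Sym}(V)) \geq n-1$, so the $\tfrac{2}{n}$-equality forces $\dim(\text{Skew-Sym}(V)) = n(n-1)/2$ and therefore $\text{Skew-Sym}(V) = \text{Skew-Sym}(W\otimes W)$, $\dim(\text{Sym}(V)) = n-1$; the projection branch propagates the equality to $V' \subset W'\otimes W'$, where induction plus a lift argument recovers the same conclusion. For (a), $\dim(\text{Sym}(V)) = n/2$ forces Case B (Case A produces $\geq n-1$), with $s = 1$ for the chosen $a$, $\dim(W') = n-2$, and $V'$ itself in the equality case (a); induction then yields $\dim(\text{Skew-Sym}(V')) = (n-2)/2$. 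Combined with $\dim(V) \geq n$ (from $n$ lin-independent rank-1 tensors in $V$ provided by the spanning hypothesis) and the decomposition $\dim(\text{Skew-Sym}(V)) = \dim(L_2) + (n-2)/2$ for $L_2 := \text{Skew-Sym}(V)\cap\ker(P\otimes P)$, this gives $\dim(L_2) \geq 1$. The matching upper bound $\dim(L_2) \leq 1$ comes from the rigidity that $s_a = 1$ for every choice of left factor $a$, which forces rank-1 tensors in $V$ to pair up via a projective involution and forbids extra skew combinations in $\ker(P\otimes P)\cap V$.

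\textbf{Main obstacle.} The main inequality and part (b) are natural descendants of the tools already developed. The delicate point is part (a), specifically the upper bound $\dim(L_2) \leq 1$: the natural lower bounds only give $\dim(\text{Skew-Sym}(V)) \geq n/2$, and matching it from above requires carefully exploiting rank-1 generation together with the uniform rigidity $s_a = 1$ for every valid left factor $a$, so that any candidate extra skew element $a\otimes u - u\otimes a$ with $u\notin\text{span}\{a,b\}$ produces, through rank-1 decomposition, rank-1 tensors in $V$ incompatible with $s_a=1$.
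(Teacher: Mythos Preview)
Your inductive proof of $\dim(\text{Sym}(V))\geq n/2$ is correct, but the paper takes a much shorter direct route: choose a rank-$1$ basis $\{a_1\otimes b_1,\dots,a_m\otimes b_m\}$ of $V$, let $\{a_i\otimes b_i+b_i\otimes a_i\}_{i\le t}$ be a basis of $\text{Sym}(V)$, and observe that for any $0\neq v\otimes w\in V$ the symmetric part $v\otimes w+w\otimes v$ lies in the span of these $t$ tensors, forcing $v,w\in S:=\text{span}\{a_1,\dots,a_t,b_1,\dots,b_t\}$. The spanning hypothesis then gives $W\subset S$, hence $n\le 2t$. No induction, no projections.

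This direct approach pays off immediately in part (a), where your outline has a genuine gap. In the equality case $n=2t$ the set $\{a_1,\dots,a_t,b_1,\dots,b_t\}$ is a basis of $W$; the paper then observes that for any rank-$1$ tensor $v\otimes w\in V$, writing $v\otimes w+w\otimes v=\sum_{j}\alpha_j(a_{i_j}\otimes b_{i_j}+b_{i_j}\otimes a_{i_j})$ with all $\alpha_j\neq 0$, the right side has tensor rank $2s$ (because the $a_{i_j},b_{i_j}$ are linearly independent) while the left side has rank $\le 2$, so $s=1$ and $\text{span}\{v,w\}=\text{span}\{a_{i_1},b_{i_1}\}$. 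Hence every skew part $v\otimes w-w\otimes v$ is a multiple of some $a_i\otimes b_i-b_i\otimes a_i$, and these $t$ skew tensors are linearly independent, giving $\dim(\text{Skew-Sym}(V))=t$. This completely sidesteps your ``$\dim(L_2)\le 1$'' obstacle: you were trying to bound the skew part of a single kernel slice, whereas the paper's basis argument controls all rank-$1$ tensors globally via a tensor-rank count. Your proposed rigidity argument (``$s_a=1$ for all $a$ plus rank-$1$ decomposition'') does not obviously yield the bound, since an element $a\otimes u-u\otimes a\in V$ need not itself contain any rank-$1$ tensor with left or right factor $a$ in its rank-$1$ decomposition.

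For part (b) the paper also avoids induction: it shows that if $\frac{2\dim(\text{Skew-Sym}(V))}{\dim(\text{Sym}(V))}=n$ then for \emph{every} $0\neq a'\otimes b'\in V$ one has $\dim(\text{Skew-Sym}(a'\otimes W+W\otimes a')\cap V)>n/2$, by writing the ratio as a nontrivial convex combination of the corresponding ratio on $\ker(P\otimes P)\cap V$ and on $P\otimes P(V)$ (the latter being $\le n-1$ by theorem \ref{theoremmain}). Corollary \ref{corollaryG} then gives $\dim(\text{Sym}(V))\ge n-1$ directly, and the equality hypothesis forces $\text{Skew-Sym}(V)=\text{Skew-Sym}(W\otimes W)$. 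Your ``projection branch propagates equality by induction'' sketch is plausible but would require care, whereas the convex-combination trick handles all choices of $a'$ at once.
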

\begin{proof}
The inequality $\dim(\text{Sym}(V))\geq\frac{2\dim(\text{Skew-Sym}(V))}{\dim(W)}$ was proved in theorem \ref{theoremmain}.

Let $\{a_1\otimes b_1,\ldots, a_n\otimes b_n\}$ be a basis of $V$. Thus, $\{a_1\otimes b_1+b_1\otimes a_1,\ldots, a_n\otimes b_n+b_n\otimes a_n\}$ is a  generating set of  $\text{\text{Sym}}(V)$. Without loss of generality, assume $\{a_1\otimes b_1+b_1\otimes a_1,\ldots, a_t\otimes b_t+b_t\otimes a_t\}$ is a basis of  $\text{Sym}(V)$.

Let $0\neq v\otimes w\in V$ then $v\otimes w+w\otimes v\in\text{span}\{a_1\otimes b_1+b_1\otimes a_1,\ldots, a_t\otimes b_t+b_t\otimes a_t\}$. Therefore, $v\in\text{span}\{a_1,\ldots,a_t,b_1,\ldots,b_t\}$ and $W=\text{span}\{v|\  0\neq v\otimes w\in V\}\subset\text{span}\{a_1,\ldots,a_t,b_1,\ldots,b_t\}$. Thus, $\dim(W)\leq 2t=2\dim(\text{Sym}(V))$.

Now, let us prove item $a)$. 

Notice that if $\dim(W)=2\dim(\text{Sym}(V))=2t$ then $W=\text{span}\{a_1,\ldots,a_t,b_1,\ldots,b_t\}$ and the set $\{a_1,\ldots,a_t,b_1,\ldots,b_t\}$ is  a basis of $W$. 

Let $v\otimes w+w\otimes v=\sum_{j=1}^s\alpha_{j}(a_{i_j}\otimes b_{i_j}+b_{i_j}\otimes a_{i_j})$, where $\alpha_{j}\neq 0$ for every $j$.

Since $\{a_{i_1},\ldots,a_{i_s},b_{i_1},\ldots,b_{i_s}\}\subset \{a_1,\ldots,a_t,b_1,\ldots,b_t\}$ then $\{a_{i_1},\ldots,a_{i_s},b_{i_1},\ldots,b_{i_s}\}$ is a linear independent set and since $\alpha_{j}\neq 0$, for every $j$, the tensor rank of $\sum_{j=1}^s\alpha_{j}(a_{i_j}\otimes b_{i_j}+b_{i_j}\otimes a_{i_j})$ is $2s$.  Since the tensor rank of $v\otimes w+w\otimes v$ is 1 or 2 then $s=1$, the tensor rank of $v\otimes w+w\otimes v$ is 2  and $v\otimes w+w\otimes v=\alpha_{1}(a_{i_1}\otimes b_{i_1}+b_{i_1}\otimes a_{i_1})$. Therefore, $\text{span}\{v,w\}=\text{span}\{a_{i_1}, b_{i_1}\}$.

So $\beta(a_{i_1}\otimes b_{i_1}-b_{i_1}\otimes a_{i_1})=v\otimes w-w\otimes v$, for some $\beta\in\mathbb{K}$.
Since $V$ is generated by $\{v\otimes w|\ 0\neq v\otimes w\in V \}$ and $v\otimes w-w\otimes v$  is equal to some $a_i\otimes b_i-b_i\otimes a_i$, $1\leq i\leq t$, then $\{a_1\otimes b_1-b_1\otimes a_1,\ldots, a_t\otimes b_t-b_t\otimes a_t\}$ is a  generating set of  $\text{Skew-Sym}(V)$. Since $\{a_{1},\ldots,a_{t},b_{1},\ldots,b_{t}\}$ is a linear independent set then $\{a_1\otimes b_1-b_1\otimes a_1,\ldots, a_t\otimes b_t-b_t\otimes a_t\}$ is also a linear independent set. Therefore, $\dim(\text{Skew-Sym}(V))=t=\dim(\text{Sym}(V))$.

Next, let us prove item $b)$. 

If $\dim(W)=1$ then  $\dim(\text{Skew-Sym}(V))=\dim(\text{Skew-Sym}(W\otimes W))=0$. So $\frac{2\dim(\text{Skew-Sym}(V))}{\dim(\text{Sym}(V))}=\dim(W)$ implies $\dim(W)\geq 2$ and $\dim(\text{Skew-Sym}(V))\geq\dim(\text{Sym}(V))$.

Let $0\neq a'\otimes b'\in V$ and $P:W\rightarrow W$ be a linear transformation such that $\ker(P)=\text{span}\{a'\}$.  Denote by $a'\otimes W=\{a'\otimes w|\ w\in W \}$ and $ W\otimes a'=\{w\otimes a'|\ w\in W \}$. Thus, $\ker(P\otimes P)=a'\otimes W+W\otimes a'$.

Assume $V\subset\ker(P\otimes P)$. Since $V$ is generated by tensors with tensor rank 1  then $V$ is generated by $((a'\otimes W)\cup (W\otimes a'))\cap V$. Thus, the linear transformations $P_1:(a'\otimes W)\cap V\rightarrow \text{Sym}(V)$, $P_1(a'\otimes w)=a'\otimes w+w\otimes a'$, and $P_2:(a'\otimes W)\cap V\rightarrow \text{Skew-Sym}(V)$, $P_2(a'\otimes w)=a'\otimes w-w\otimes a'$, are surjective. Note that $P_1$ is also injective, since the characteristic of $\mathbb{K}$ is not 2. Thus, $\dim(\text{Sym}(V))=\dim((a'\otimes W)\cap V)\geq \dim(\text{Skew-Sym}(V))$. 

 Therefore, $\dim(\text{Skew-Sym}(V))=\dim(\text{Sym}(V))$ and $\dim(W)=\frac{2\dim(\text{Skew-Sym}(V))}{\dim(\text{Sym}(V))}=2$. Hence, $\dim(\text{Skew-Sym}(V))=1$, $\text{Skew-Sym}(V)=\text{Skew-Sym}(W\otimes W)$ and $\dim(\text{Sym}(V))=1=\dim(W)-1$.

Now, assume that $P\otimes P(V)\neq 0$ and notice that $\dim(\ker(P\otimes P)\cap \text{Sym}(V))\geq 1$, since $0\neq a'\otimes b'+b'\otimes a'\in \ker(P\otimes P)\cap \text{Sym}(V)$. 

Since $P\otimes P(\text{Sym}(V))\subset\text{Sym}(P\otimes P(V))$, $P\otimes P(\text{Skew-Sym}(V))\subset\text{Skew-Sym}(P\otimes P(V))$, $V=\text{Sym}(V)\oplus\text{Skew-Sym}(V)$ then $P\otimes P:\text{Sym}(V)\rightarrow \text{Sym}(P\otimes P(V))$ and $P\otimes P:\text{Skew-Sym}(V)\rightarrow \text{Skew-Sym}(P\otimes P(V))$ are surjective, thus $\dim(\text{Sym}(V))=\dim(\text{Sym}(P\otimes P(V)))+\dim(\ker (P\otimes P)\cap \text{Sym}(V))$ and $\dim(\text{Skew-Sym}(V))=\dim(\text{Skew-Sym}(P\otimes P(V)))+\dim(\ker (P\otimes P)\cap \text{Skew-Sym}(V))$.

Next, since $0\neq P\otimes P(V)\subset P(W)\otimes P(W)$, $P\otimes P(V)$ is generated by tensors with tensor rank 1 and is invariant under flip operator then $\dim(\text{Sym}(P\otimes P(V)))\geq 1$ and $\frac{2\dim(\text{Skew-Sym}(P\otimes P(V)))}{\dim(\text{Sym}(P\otimes P(V)))}\leq \dim(P(W))=\dim(W)-1$, by theorem \ref{theoremmain}. 

Note that  if $\dim(\text{Skew-Sym}(V)\cap \ker(P\otimes P))\leq \frac{\dim(W)}{2}$ then $\frac{2\dim(\text{Skew-Sym}(V)\cap \ker(P\otimes P))}{\dim(\text{Sym}(V)\cap \ker(P\otimes P))}\leq \dim(W)$.
 
 Since $\frac{2\dim(\text{Skew-Sym}(V))}{\dim(\text{Sym}(V))}$ is  a non-trivial convex combination of $\frac{2\dim(\text{Skew-Sym}(V)\cap \ker(P\otimes P))}{\dim(\text{Sym}(V)\cap \ker(P\otimes P))}$ and \\
$\frac{2\dim(\text{Skew-Sym}(P\otimes P(V)))}{\dim(\text{Sym}(P\otimes P(V)))}$ $(\frac{\dim(\text{Sym}(V)\cap \ker(P\otimes P))}{\dim\text{Sym}(V)}+\frac{\dim(\text{Sym}(P\otimes P(V)))}{\dim\text{Sym}(V)}=1)$  then $\frac{2\dim(\text{Skew-Sym}(V))}{\dim(\text{Sym}(V))}<\dim(W)$, if  $\dim(\text{Skew-Sym}(V)\cap \ker(P\otimes P))\leq \frac{\dim(W)}{2}$. 

So $\frac{2\dim(\text{Skew-Sym}(V))}{\dim(\text{Sym}(V))}=\dim(W)$ implies $\dim(\text{Skew-Sym}(V)\cap \ker(P\otimes P))> \frac{\dim(W)}{2}$. 
Thus, for every $0\neq a'\otimes b'\in V$, we have $\dim(\text{Skew-Sym}(V)\cap (a'\otimes W+ W\otimes a'))=\dim(\text{Skew-Sym}(a'\otimes W+ W\otimes a')\cap V) > \frac{\dim(W)}{2}$. Thus, by corollary \ref{corollaryG}, we have $\dim(\text{Sym}(V))\geq\dim(W)-1$.
Finally, $\dim(\text{Skew-Sym}(W\otimes W))\geq\dim(\text{Skew-Sym}(V))= \frac{\dim(W)\dim(\text{Sym}(V))}{2}\geq \frac{\dim(W)(\dim(W)-1)}{2}=\dim(\text{Skew-Sym}(W\otimes W))$. Therefore, $\text{Skew-Sym}(W\otimes W)=\text{Skew-Sym}(V)$ and $\dim(\text{Sym}(V))=\frac{2\dim(\text{Skew-Sym}(V))}{\dim(W)}=\dim(W)-1$.
\end{proof}
\section{Applications to Quantum Information Theory}

In this section, we show that if $\rho\in M_k\otimes M_k\simeq M_{k^2}$ is separable and $r$ is the marginal rank of $\rho+F\rho F$ then $\text{rank }(Id+F)\rho(Id+F)\geq\text{max}\{ \frac{2}{r}\text{rank }(Id-F)\rho(Id-F), \frac{r}{2}\}$ (corollary \ref{corollarymain}). We also show that this inequality is sharp (corollary \ref{corollaryexample}).

Let $M_k$ denote the set of complex matrices of order $k$ and $\mathbb{C}^k$ be the set of colunm vectors with $k$ complex entries. We shall identify the tensor product space $\mathbb{C}^k\otimes\mathbb{C}^m$ with $\mathbb{C}^{km}$ and the tensor product space $M_{k}\otimes M_{m}$ with $M_{km}$, via Kronecker product (i.e., if $A=(a_{ij})\in M_k$ and $B\in M_m$ then $A\otimes B=(a_{ij}B)\in M_{km}$. If $v=(v_i)\in \mathbb{C}^k$ and $w\in \mathbb{C}^m$ then $v\otimes w=(v_iw)\in\mathbb{C}^{km}$).

The identification of the tensor product space $\mathbb{C}^k\otimes\mathbb{C}^m$ with $\mathbb{C}^{km}$ and the tensor product space $M_{k}\otimes M_{m}$ with $M_{km}$, via Kronecker product, allow us to write $(v\otimes w)(r\otimes s)^t= vr^t\otimes ws^t$, where $v\otimes w$ is a column, $(v\otimes w)^t$ its transpose and $v,r\in\mathbb{C}^k$ and $w,s\in\mathbb{C}^m$.  Therefore if $x,y\in\mathbb{C}^k\otimes\mathbb{C}^m\simeq\mathbb{C}^{km}$ we have $xy^t\in M_k\otimes M_m\simeq M_{km}$.
 
The image (or the range) of the matrix $\rho\in M_k\otimes M_m\simeq M_{km}$  in $\mathbb{C}^k\otimes\mathbb{C}^m\simeq \mathbb{C}^{km}$ shall be denoted by $\Im(\rho)$.

\begin{definition}\label{definitionseparability}\textbf{$($Separable Matrices$)$} Let $\rho\in M_k\otimes M_m$.
We say that $\rho$ is separable if $\rho=\sum_{i=1}^n C_i\otimes D_i$ such that  $C_i\in M_k$ and $D_i\in M_m$ are positive semidefinite Hermitian matrices for every $i$.  If $\rho$ is not separable then  $\rho$ is entangled.
\end{definition}

\begin{definition}\label{definitionmarginal} Let $\rho=\sum_{i=1}^mA_i\otimes B_i\in M_k\otimes M_k$. Define $\rho^A=\sum_{i=1}^mA_itr(B_i)\in M_k$ and $\rho^B=\sum_{i=1}^mB_itr(A_i) \in M_k$. The matrices $\rho^A,\rho^B$ are usually called the marginal or local matrices. The marginal ranks of $\rho$ are the ranks of  $\rho^A$ and $\rho^B$. If they are equal, we shall call them the marginal rank of $\rho$.
\end{definition}

\begin{remark}\label{remarklocalmatrices} It is well known that if $\rho\in M_k\otimes M_k\simeq M_{k^2}$ is a positive semidefinite Hermitian matrix then $\rho^A\in M_k$ and $\rho^B\in M_k$ are too. Moreover, $(F\rho F)^A=\rho^B$, $(F\rho F)^B=\rho^A$ and $\Im(\rho)\subset\Im(\rho^A)\otimes\Im(\rho^B)$.
\end{remark}

\begin{theorem} \label{theoreminequality}
Let $\rho\in M_k\otimes M_k\simeq M_{k^2}$ be a positive semidefinite hermitian matrix. If $\Im(\rho)$ is generated by tensors with tensor rank 1 and $r$ is the marginal rank of $\rho+F \rho F$ then  \begin{center}
$\text{rank }(Id+F)\rho(Id+F)\geq\text{max}\{ \frac{2}{r}\text{rank }(Id-F)\rho(Id-F), \frac{r}{2}\},$
\end{center} where $F\in M_k\otimes M_k$ is the flip operator, $Id\in M_k\otimes M_k$ is the identity .
\end{theorem}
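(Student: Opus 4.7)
The plan is to reduce the statement to Theorem~\ref{theoremmain2} applied to $V = \Im(\rho + F\rho F)$ and $W = \Im(\rho^A + \rho^B)$. By Remark~\ref{remarklocalmatrices} we have $(\rho+F\rho F)^A = (\rho+F\rho F)^B = \rho^A+\rho^B$, so $\dim(W)=r$ and $V\subset W\otimes W$. Once $V$ and $W$ are shown to satisfy the three hypotheses of Theorem~\ref{theoremmain2}, the conclusion will be exactly the inequality to be proved, after translating $\dim(\text{Sym}(V))$ and $\dim(\text{Skew-Sym}(V))$ back to ranks.

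I would first verify the flip-invariance and tensor rank $1$ generation of $V$. Since $F(\rho+F\rho F)F = \rho+F\rho F$, the range $V$ is stable under $F$, so $V=\text{Sym}(V)\oplus\text{Skew-Sym}(V)$. Because $\rho$ and $F\rho F$ are positive semidefinite hermitian, $V=\Im(\rho)+\Im(F\rho F)$; the first summand is generated by tensor rank $1$ tensors by hypothesis, and the second is $F$ applied to the first (so again generated by tensor rank $1$ tensors), hence $V$ is generated by tensor rank $1$ tensors.

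Next I would translate the quantities in Theorem~\ref{theoremmain2} to the ones in the claim. The algebraic identities $(Id\pm F)\rho(Id\pm F)=(Id\pm F)(\rho+F\rho F)$ give $\Im((Id\pm F)\rho(Id\pm F))=(Id\pm F)V$, and because $V$ is $F$-invariant these simplify to $\Im((Id+F)\rho(Id+F))=\text{Sym}(V)$ and $\Im((Id-F)\rho(Id-F))=\text{Skew-Sym}(V)$. Hence the two ranks in the statement equal $\dim(\text{Sym}(V))$ and $\dim(\text{Skew-Sym}(V))$.

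The remaining hypothesis is $\text{span}\{v\mid 0\neq v\otimes w\in V\}=W$. One inclusion is immediate from $V\subset W\otimes W$. For the other, I would use the following observation: if $\sigma\in M_k\otimes M_k$ is positive semidefinite hermitian and $\Im(\sigma)$ is generated by tensor rank $1$ tensors $\{c_m\otimes d_m\}$, then every $x\in\Im(\sigma)$ is a linear combination of these generators, so the column span of the matrix reshape of $x$ is contained in $\text{span}\{c_m\}$; since $\Im(\sigma^A)$ is the sum of such column spans taken over a spectral basis of $\Im(\sigma)$, one gets $\Im(\sigma^A)\subset\text{span}\{c_m\}$. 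Applying this to $\sigma=\rho$ and $\sigma=F\rho F$ and taking the union of generators shows that the required span contains $\Im(\rho^A)+\Im(\rho^B)=\Im(\rho^A+\rho^B)=W$. Finally, Theorem~\ref{theoremmain2} gives $\dim(\text{Sym}(V))\geq\max\{2\dim(\text{Skew-Sym}(V))/r,\,r/2\}$, which is the desired inequality. The only non-bookkeeping step is the span-of-first-factors identification; everything else is a direct consequence of $[\rho+F\rho F,F]=0$, the identity $(Id\pm F)\rho(Id\pm F)=(Id\pm F)(\rho+F\rho F)$, and the preservation of tensor rank $1$ under $F$.
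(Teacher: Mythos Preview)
Your proposal is correct and follows essentially the same route as the paper: set $V=\Im(\rho+F\rho F)$, $W=\Im((\rho+F\rho F)^A)$, verify the hypotheses of Theorem~\ref{theoremmain2}, and translate $\dim(\text{Sym}(V))$, $\dim(\text{Skew-Sym}(V))$ into the two ranks. The only difference is in checking the span condition $\text{span}\{v\mid 0\neq v\otimes w\in V\}=W$: the paper does this in one line via the trace identity $tr\big((\rho+F\rho F)(m\bar m^{\,t}\otimes Id)\big)=tr(\sigma\, m\bar m^{\,t})$ together with positive semidefiniteness (so $m\in W^\perp\Leftrightarrow m\in\ker\sigma$), whereas you argue through column spans of reshapes of a spectral basis to obtain $\Im(\rho^A)+\Im(\rho^B)\subset\text{span}\{c_m,d_m\}$; both are valid, the paper's version is just a bit quicker.
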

\begin{proof}
Firstly, notice that $(\rho+F\rho F)^A=(\rho+F\rho F)^B$, and let us denote this marginal matrix by $\sigma$. By remark \ref{remarklocalmatrices}, $\Im(\rho+F\rho F)\subset \Im(\sigma)\otimes\Im(\sigma)$ and, by hypothesis, $\text{rank}(\sigma)=r$.

Secondly, notice that the range of $B=2(\rho+F\rho F)=(Id+F)\rho (Id+F)+(Id-F)\rho (Id-F)$ is generated by tensors with tensor rank 1, is invariant under flip operator and is a subset of $\Im(\sigma)\otimes\Im(\sigma)$.
Moreover, $\dim(\text{Sym}(\Im(B)))=\text{rank}(Id+F)\rho (Id+F)$ and $\dim(\text{Skew-Sym}(\Im(B)))=\text{rank}(Id-F)\rho (Id-F)$.
Therefore, by theorem \ref{theoremmain}, $\text{rank}((Id+F)\rho (Id+F))\geq \frac{2}{r}\text{rank}((Id-F)\rho (Id-F)).$

Now, let $W=\text{span}\{v|\  0\neq v\otimes w\in \Im(B)\}$. Since $B$ is generated by tensors with tensor rank 1 and $tr(B (m\overline{m}^t\otimes Id))=2 tr(\sigma  m\overline{m}^t)$ then $m\in W^{\perp}$ if and only if $m\in\ker(\sigma)$. Thus, $W=\Im(\sigma)$. Finally, by theorem \ref{theoremmain2}, $\text{rank}((Id+F)\rho (Id+F))\geq \frac{r}{2}$.
\end{proof}

\begin{corollary}\label{corollarymain} If $\rho\in M_k\otimes M_k\simeq M_{k^2}$ is separable and $r$ is the marginal rank of $\rho+F\rho F$ then $\text{rank }(Id+F)\rho(Id+F)\geq\text{max}\{ \frac{2}{r}\text{rank }(Id-F)\rho(Id-F), \frac{r}{2}\}$.
\end{corollary}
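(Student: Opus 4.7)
The plan is to deduce this corollary directly from Theorem \ref{theoreminequality} via the range criterion. The bridge between the two statements is the observation that separability of $\rho$ forces $\Im(\rho)$ to be generated by tensors with tensor rank $1$, after which Theorem \ref{theoreminequality} applies verbatim.

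First I would use the separable decomposition $\rho = \sum_i C_i\otimes D_i$ with $C_i,D_i$ positive semidefinite Hermitian, and further spectrally decompose each factor as a sum of rank-one positive semidefinite matrices $C_i = \sum_j v_{ij}\overline{v_{ij}}^t$ and $D_i = \sum_\ell w_{i\ell}\overline{w_{i\ell}}^t$. Collecting terms and using $(v\otimes w)\overline{(v\otimes w)}^t = v\overline{v}^t\otimes w\overline{w}^t$, this rewrites $\rho$ as a positive combination of rank-one projections onto product vectors $v_{ij}\otimes w_{i\ell}\in\mathbb{C}^k\otimes\mathbb{C}^k$. Because $\rho$ is positive semidefinite, $\Im(\rho)$ is exactly the span of the column vectors appearing in such an outer-product expansion, so $\Im(\rho)$ is spanned by the product vectors $v_{ij}\otimes w_{i\ell}$, each of which has tensor rank $1$. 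This is the classical range criterion of Horodecki referenced in the Introduction.

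Once this is established, $\rho$ satisfies the hypothesis of Theorem \ref{theoreminequality}, and the stated inequality
follows immediately. There is no real obstacle here: the combinatorial content has already been carried out in Theorems \ref{theoremmain} and \ref{theoremmain2} and assembled in Theorem \ref{theoreminequality}. The only minor bookkeeping is to confirm that the marginal rank $r$ of $\rho+F\rho F$ is well-defined, i.e.\ that the two marginal ranks coincide; but by remark \ref{remarklocalmatrices} one has $(\rho+F\rho F)^A = \rho^A+\rho^B = (\rho+F\rho F)^B$, so both marginal matrices are equal and in particular share a common rank.
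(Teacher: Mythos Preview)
Your proposal is correct and follows essentially the same route as the paper: invoke the range criterion (which you spell out explicitly via the separable decomposition and spectral decompositions of the factors) to conclude that $\Im(\rho)$ is generated by rank-one tensors, and then apply Theorem~\ref{theoreminequality}. The paper's proof is the same argument condensed into a single citation of the range criterion, and your remark on the equality of the two marginals of $\rho+F\rho F$ is already handled inside the proof of Theorem~\ref{theoreminequality}.
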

\begin{proof}
By the range criterion \cite{pawel}, $\Im(\rho)$ has a generating subset formed by tensors with tensor rank 1. Now, use theorem \ref{theoreminequality}.
\end{proof}

\begin{example}\label{example1} Let $B\in M_k\otimes M_k\simeq M_{k^2}$ be a positive semidefinite Hermitian matrix with rank smaller than $k-1$.  The matrix  $\rho=B+(Id-F)$ is not separable since $\text{rank } (Id+F)\rho (Id+F)=\text{rank } (Id+F)B(Id+F)<k-1=\frac{2}{k}\text{rank } (Id-F)\rho (Id-F)$.
\end{example}

\begin{corollary}\label{corollaryexample} For every $k$, there is a separable matrix $\rho\in M_k\otimes M_k$ such that the marginal rank of $\rho+F\rho F$ is $k$ and  $\text{rank } (Id+F)\rho (Id+F)=k-1=\frac{2}{k} \text{ rank } (Id-F)\rho (Id-F)$. Therefore the inequality of theorem \ref{theoreminequality} is sharp.
\end{corollary}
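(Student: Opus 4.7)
The plan is to construct $\rho$ as a positive combination of rank-one outer products taken over a rank-one generating set of the extremal subspace of Theorem \ref{theoremexample}. Applying that theorem with $\mathbb{K} = \mathbb{C}$ and $W = \mathbb{C}^k$ yields a subspace $V \subset \mathbb{C}^k \otimes \mathbb{C}^k$ with $F(V) \subset V$, $\text{span}\{v|\ 0 \neq v \otimes w \in V\} = \mathbb{C}^k$, $\dim\text{Sym}(V) = k-1$, and $\dim\text{Skew-Sym}(V) = k(k-1)/2$, together with a generating set $\{u_i = v_i \otimes w_i\}_{i=1}^m$ consisting of tensor-rank-one elements. I would then set
\begin{equation*}
\rho \;=\; \sum_{i=1}^m u_i u_i^* \;=\; \sum_{i=1}^m (v_i \overline{v_i}^t) \otimes (w_i \overline{w_i}^t) \;\in\; M_k \otimes M_k,
\end{equation*}
which is separable since each summand is a tensor product of two rank-one positive semidefinite Hermitian matrices.

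The three ranks then follow directly. Because $\rho = \sum_i u_i u_i^*$, we have $\text{Im}(\rho) = \text{span}\{u_i\} = V$, and since $Id \pm F$ is Hermitian,
\begin{equation*}
(Id \pm F)\rho(Id \pm F) \;=\; \sum_{i=1}^m \bigl((Id \pm F)u_i\bigr)\bigl((Id \pm F)u_i\bigr)^*,
\end{equation*}
whose range is $(Id \pm F)V$. Decomposing any element of $V$ into its symmetric and skew-symmetric parts shows $(Id+F)V = 2\,\text{Sym}(V)$ and $(Id-F)V = 2\,\text{Skew-Sym}(V)$, so $\text{rank}(Id+F)\rho(Id+F) = k-1$ and $\text{rank}(Id-F)\rho(Id-F) = k(k-1)/2$, which do satisfy $k-1 = \frac{2}{k}\cdot\frac{k(k-1)}{2}$.

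For the marginal rank of $\rho + F\rho F$, a direct computation gives $\rho^A = \sum_i |w_i|^2 v_i \overline{v_i}^t$ and $\rho^B = \sum_i |v_i|^2 w_i \overline{w_i}^t$, hence $\text{Im}(\rho^A) = \text{span}\{v_i\}$ and $\text{Im}(\rho^B) = \text{span}\{w_i\}$. For any rank-one tensor $0 \neq v \otimes w \in V$, writing it as a linear combination of the $u_i$ and comparing column spans of the associated rank-one matrices forces $v \in \text{span}\{v_i\}$, so by the spanning condition on $V$, $\text{span}\{v_i\} = \mathbb{C}^k$. By Remark \ref{remarklocalmatrices}, $(F\rho F)^A = \rho^B$, so $(\rho + F\rho F)^A = \rho^A + \rho^B$ has range $\mathbb{C}^k$; by symmetry the same holds for $(\rho + F\rho F)^B$, so the marginal rank equals $k$. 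Since the heavy lifting of exhibiting the extremal subspace with a rank-one generating set has already been carried out in Theorem \ref{theoremexample}, I anticipate no genuine obstacle here; the only subtlety is the observation that the left vectors $v_i$ of any rank-one generating set of $V$ necessarily span $\mathbb{C}^k$.
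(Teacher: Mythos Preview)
Your construction and the computation of the two ranks are exactly the paper's approach: take the extremal subspace $V$ from Theorem~\ref{theoremexample}, pick a rank-one spanning family, form $\rho=\sum_i (v_i\overline{v_i}^{\,t})\otimes(w_i\overline{w_i}^{\,t})$, and identify the ranges of $(Id\pm F)\rho(Id\pm F)$ with $\mathrm{Sym}(V)$ and $\mathrm{Skew\text{-}Sym}(V)$.

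The only place you diverge from the paper is in establishing that the marginal rank of $\rho+F\rho F$ equals $k$. You compute $\rho^A$ and $\rho^B$ directly and then argue that the left legs $v_i$ of any rank-one generating set of $V$ must span $\mathbb{C}^k$ because $\text{span}\{v\mid 0\neq v\otimes w\in V\}=\mathbb{C}^k$; this is correct and self-contained. The paper instead bootstraps off the very inequality being shown to be sharp: since $\rho$ is separable, Theorem~\ref{theoreminequality} gives $\text{rank}(Id+F)\rho(Id+F)\geq\frac{2}{r}\,\text{rank}(Id-F)\rho(Id-F)$, and comparing this with the already-established equality $\text{rank}(Id+F)\rho(Id+F)=\frac{2}{k}\,\text{rank}(Id-F)\rho(Id-F)$ forces $r\geq k$, hence $r=k$. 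The paper's route is shorter and has the pleasant feature that the marginal rank falls out of the sharpness itself, without touching the specific structure of the $v_i$; your route is more hands-on but avoids any appearance of circularity. Both are fine.
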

\begin{proof} Let $V\subset \mathbb{C}^k\otimes \mathbb{C}^k$ be the vector space described in theorem \ref{theoremexample}. Let $\{v_1\otimes w_1,\ldots, v_m\otimes w_m\}$ be  a basis for $V$ and consider $\rho=\sum_{i=1}^m v_i\overline{v_i}^t\otimes w_i\overline{w_i}^t$.

Notice that $\rho $ is separable and $(Id+F)\rho (Id+F)=\sum_{i=1}^ms_i\overline{s_i}^t$, where $s_i=v_i\otimes w_i+w_i\otimes v_i$. Notice that $\{s_1,\ldots,s_m\}$ is a generating set for $\text{Sym}(V)$ and for the image of $(Id+F)\rho (Id+F)$. So $\text{rank}(Id+F)\rho (Id+F)=\dim(\text{Sym}(V))$.
Analogously, we have  $\text{rank}(Id-F)\rho (Id-F)=\dim(\text{Skew-Sym}(V))$. Thus, $\text{rank}(Id+F)\rho (Id+F)=\frac{2}{k}\text{rank}(Id-F)\rho (Id-F)$.

Notice that $\frac{2}{k}\text{rank}(Id-F)\rho (Id-F)=\text{rank}(Id+F)\rho (Id+F)\geq\frac{2}{r}\text{rank}(Id-F)\rho (Id-F)$, where $r$ is the marginal rank of $\rho+F\rho F$. Thus, $r\geq k$. Since $\rho+F\rho F\in M_k\otimes M_k$ then $r\leq k$. Therefore, $r=k$. Finally, by item $b)$ of theorem \ref{theoremmain2}, we have 
$\dim(\text{Sym}(V))=k-1=\text{rank}(Id+F)\rho (Id+F)$
\end{proof}

\section{A Gap for  PPT Entanglement}

In this section we prove that if $\rho\in M_k\otimes M_k\simeq M_{k^2}$ is positive under partial transposition (definition \ref{defPPT}) and $\text{rank}((Id+F)\rho(Id+F))=1$  then $\rho$ is separable (theorem \ref{theoremrank1}).  

We saw in theorem \ref{theoreminequality} that if $\rho\in M_k\otimes M_k$ is separable then $\text{rank }(Id+F)\rho (Id+F)\geq \frac{2}{r}\text{rank }(Id-F)\rho (Id-F)$, where $r$ is the marginal rank of $\rho+F\rho F$.

Notice that there is a possibility  that a PPT matrix $\rho$ satisfying $1<\text{rank }(Id+F)\rho (Id+F)< \frac{2}{r}\text{rank }(Id-F)\rho (Id-F)$ exists. In this case $\rho$ is entangled. So this is a gap where we can look for PPT entanglement.

Here, we also prove that $\text{rank}((Id+F)\rho (Id+F))\geq \frac{2}{r}\text{rank}((Id-F)\rho (Id-F))$ for any PPT matrix $\rho\in M_k\otimes M_k$, when $r\leq 3$ (corollary \ref{corollaryM3}). In the next secion, we provide several non-trivial examples of PPT matrices $\rho\in M_k\otimes M_k\simeq M_{k^2}$ such that $\text{rank}(Id+F)\rho (Id+F)\geq r\geq \frac{2}{r-1}\text{rank}(Id-F)\rho (Id-F)$. \\

We shall denote by $A^{t_2}$ the matrix $\sum_{i=1}^nA_i\otimes B_i^t$, which is called the partial transposition of  $A=\sum_{i=1}^nA_i\otimes B_i\in M_k\otimes M_m\simeq M_{km}$. 

\begin{definition}\label{defPPT}$($\textbf{PPT matrices}$)$ Let $A=\sum_{i=1}^nA_i\otimes B_i\in M_k\otimes M_m\simeq M_{km}$ be a positive semidefinite Hermitian matrix. We say that $A$ is positive under partial transposition or simply  PPT, if $A^{t_2}=Id\otimes(\cdot)^t(A)=\sum_{i=1}^nA_i\otimes B_i^t$ is positive semidefinite.
\end{definition}

\begin{definition}\label{defR} Let $V:M_k\rightarrow \mathbb{C}^k\otimes\mathbb{C}^k$ be defined by  $V(\sum_{i=1}^n a_ib_i^t)=\sum_{i=1}^na_i\otimes b_i$ and let $R:M_{k}\otimes M_k\rightarrow M_{k}\otimes M_k$ be defined by 
$R(\sum_{i=1}^nA_i\otimes B_i)= \sum_{i=1}^nV(A_i)V(B_i)^t,$ where $V(A_i)\in \mathbb{C}^k\otimes\mathbb{C}^k$ is a column vector and $V(B_i)^t$ is a row vector. 
This map $R:M_{k}\otimes M_k\rightarrow M_{k}\otimes M_k$  is usually called the ``realignment map" $($See \cite{chen,rudolph, rudolph2}$)$.
\end{definition}

\begin{lemma}\label{propertiesofR} $($Properties of the Realignment map$)$ Let $R:M_{n}\otimes M_n\rightarrow M_{n}\otimes M_n$ be the realignment map of definition \ref{defR} and $F\in M_n\otimes M_n$ the flip operator of definition \ref{def1} . Let $v_i,w_i\in\mathbb{C}^n\otimes\mathbb{C}^n$ and $C\in M_n\otimes M_n$. Then,

\begin{enumerate}
\item $R(\sum_{i=1}^mv_iw_i^t)=\sum_{i=1}^m V^{-1}(v_i)\otimes V^{-1}(w_i)$
\item $R(CF)F=C^{t_2}$  
\item $R(CF)=R(C)^{t_2}$
\item $R(C^{t_2})=R(C)F$ 
\item $R(C^{t_2})=(CF)^{t_2}$ 
\end{enumerate}  
\end{lemma}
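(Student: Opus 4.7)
The plan is to verify each of the five identities by direct computation on the standard basis elements $E_{ij}\otimes E_{kl}$ of $M_n\otimes M_n$, where $E_{ij}=e_ie_j^t$. All the maps involved, namely $R$, partial transposition, and right multiplication by $F$, are linear, so each assertion reduces to an identity among permutations of four indices.

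First I would establish (1). For $v=e_i\otimes e_j$ and $w=e_k\otimes e_l$, the bilinear identity $(v\otimes w)(r\otimes s)^t = vr^t\otimes ws^t$ already recorded earlier in the paper gives $vw^t = E_{ik}\otimes E_{jl}$, while $V^{-1}(v)\otimes V^{-1}(w) = E_{ij}\otimes E_{kl}$ by definition of $V$. Direct evaluation shows $R(E_{ik}\otimes E_{jl}) = V(E_{ik})V(E_{jl})^t = (e_i\otimes e_k)(e_j\otimes e_l)^t = E_{ij}\otimes E_{kl}$, so the two sides agree; bilinearity in $v$ and $w$ then gives (1) for arbitrary sums.

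For (2)--(5) I would first record how each basic operation acts on $C=E_{ij}\otimes E_{kl}$. Using $F=\sum_{p,q}E_{pq}\otimes E_{qp}$, short one-line computations yield $R(E_{ij}\otimes E_{kl})=E_{ik}\otimes E_{jl}$, which swaps the middle two indices; $(E_{ij}\otimes E_{kl})^{t_2}=E_{ij}\otimes E_{lk}$, which swaps the last two; and $(E_{ij}\otimes E_{kl})F=E_{il}\otimes E_{kj}$, which swaps $j$ and $l$. With these in hand, each of (2), (3), (4), and (5) becomes a matter of composing the appropriate sequence of swaps on $(i,j,k,l)$ and checking that both sides yield the same four-tuple; for example, (2) traces out $(i,j,k,l)\mapsto(i,l,k,j)\mapsto(i,k,l,j)\mapsto(i,j,l,k) = C^{t_2}$, and the remaining three identities are verified in the same one-step fashion.

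The only real difficulty is bookkeeping: one must apply $V$, the rank-one tensor product formula, and the expression for right multiplication by $F$ with indices in the correct positions, since a single misplaced subscript corrupts the whole chain. Once the conventions are fixed, each of the five identities is a one-step verification on basis elements and no further ideas are required.
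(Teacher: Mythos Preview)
Your proposal is correct and follows essentially the same approach as the paper: the paper defers items (1)--(4) to \cite[Lemma 23]{cariello} and, for item (5), remarks that it suffices to check the identity on simple tensors $C=ab^t\otimes cd^t$, which is exactly your basis-element verification with $a,b,c,d$ taken to be standard basis vectors. Your version simply spells out all five cases explicitly rather than citing the reference, but the underlying method---reducing to index permutations on rank-one tensors and extending by linearity---is identical.
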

\begin{proof}
See \cite[Lemma 23]{cariello} for the items 1 to 4. For the last item, notice that it is sufficient to prove this formula for $C=ab^t\otimes cd^t $, where $\{a,b,c,d\}\subset\mathbb{C}^k$, and the proof is straightforward.
\end{proof}

\begin{remark}\label{remarkrealignment} Let $u=\sum_{i=1}^ n e_i\otimes e_i\in\mathbb{C}^n\otimes \mathbb{C}^n$, where $\{e_1,\ldots,e_n\}$ is the canonical basis of $\mathbb{C}^n$. Observe that $Id=\sum_{i,j=1}^ne_ie_i^t\otimes e_je_j^t$, $uu^t=\sum_{i,j=1}^ne_ie_j^t\otimes e_ie_j^t$, $R(Id)=\sum_{i,j=1}^nV(e_ie_i^t)V(e_je_j^t)^t=\sum_{i,j=1}^ke_ie_j^t\otimes e_ie_j^t=uu^t$ and $R(uu^t)=\sum_{i,j=1}^nV(e_ie_j^t)V(e_ie_j^t)^t=\sum_{i,j=1}^ke_ie_i^t\otimes e_je_j^t=Id$.
\end{remark}

\begin{theorem}\label{theoremrank1} Let $\rho\in M_k\otimes M_k\simeq M_{k^2}$ be a positive semidefinite hermitian matrix, $Id\in M_k\otimes M_k\simeq M_{k^2}$  the identity and $F\in M_k\otimes M_k$  the flip operator.  Suppose the rank of $(Id+F)\rho(Id+F)$ is 1. If $\rho$ is positive under partial transposition then the marginal rank of $\rho+F\rho F$ is smaller or equal to 2 and $\rho$ is separable.
\end{theorem}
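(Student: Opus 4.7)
The plan is to reduce to the $F$-invariant PPT matrix $\tau:=\rho+F\rho F$, extract its block structure, bound its marginal rank via a Perron--Frobenius argument on a CP map built from the realignment, and finally conclude separability through the $M_2\otimes M_2$ case of the Horodecki theorem.

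First I would show that $\tau$ is PPT. Since $\rho$ is Hermitian, $\rho^{t_1}=(\rho^{t_2})^t\ge 0$ follows from $\rho^{t_2}\ge 0$ by invariance of PSD under transposition, and a direct computation gives $(F\rho F)^{t_2}=F\rho^{t_1}F$, so $\tau^{t_2}=\rho^{t_2}+F\rho^{t_1}F\ge 0$. Because $F\tau F=\tau$, $F$ commutes with $\tau$, and $\tau$ block-diagonalizes with respect to $\mathbb{C}^{k^2}=\text{Sym}(\mathbb{C}^k\otimes\mathbb{C}^k)\oplus\text{Skew-Sym}(\mathbb{C}^k\otimes\mathbb{C}^k)$. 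Using $(Id+F)F=(Id+F)$, the symmetric summand equals $\tfrac12(Id+F)\rho(Id+F)=\tfrac12 vv^*$, which has rank~$1$. Write $v=V(M)$ for a symmetric $M$, and take a Takagi decomposition $M=\sum_{i=1}^s\mu_i u_i u_i^t$ with $u_i$ orthonormal and $\mu_i>0$, so that $v=\sum_i\mu_i u_i\otimes u_i$. Writing $(Id-F)\rho(Id-F)=\sum_j t_j t_j^*$ with $t_j=V(N_j)$ and $N_j$ antisymmetric, the marginal satisfies $2\tau^A=MM^*+\sum_j N_j N_j^*$.

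The core step is proving $r:=\text{rank}(\tau^A)\le 2$. By Lemma~\ref{propertiesofR}, $R(\tau)=M\otimes\bar M+\sum_j N_j\otimes\bar N_j$, which is the Choi matrix of the completely positive map $\Phi(X)=MXM^*+\sum_j N_j X N_j^*$ on $M_k$. The PPT hypothesis $\tau^{t_2}\ge 0$ translates, via the same realignment identities, into a second positivity condition on $\Phi$ (namely, that $\Phi$ composed with transposition is again CP). A Perron--Frobenius theorem for positive linear maps on the cone of PSD matrices then yields a nonzero PSD eigenvector $P$ of $\Phi$ whose range is the minimal $\Phi$-invariant subspace $W:=\Im(M)+\sum_j\Im(N_j)$. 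Combining this fixed-point structure with the rank-one constraint on the symmetric block of $\tau$ is what I expect to force $\dim W\le 2$: concretely, $s\le 2$ and every $\Im(N_j)\subset\text{span}(u_1,u_2)$. Hence $r\le 2$. This Perron--Frobenius step is the main obstacle: identifying the correct positive map and extracting the dimension bound from the interplay between its two positivities and the rank-one symmetric part.

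Once $r\le 2$, the inclusions $\Im(\rho)\subset\Im(\tau)\subset\Im(\tau^A)\otimes\Im(\tau^A)$ place $\rho$ inside a subsystem isomorphic to $M_2\otimes M_2$. Restricted to this support, $\rho$ is PPT on $M_2\otimes M_2$, hence separable by the Horodecki characterization; lifting the separable decomposition back to $M_k\otimes M_k$ yields separability of $\rho$.
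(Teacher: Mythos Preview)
Your overall architecture---pass to the $F$-invariant PPT matrix $\tau=\rho+F\rho F$, analyze it through the realignment and a Perron--Frobenius argument, then finish with Horodecki in $M_2\otimes M_2$---matches the paper. But the heart of the proof, the bound $r\le 2$, is not actually established in your proposal; you yourself flag it as ``the main obstacle,'' and the sentence ``combining this fixed-point structure with the rank-one constraint \ldots\ is what I expect to force $\dim W\le 2$'' is a hope, not an argument. In particular, applying Perron--Frobenius to your map $\Phi(X)=MXM^*+\sum_j N_jXN_j^*$ only produces a PSD eigenvector at the spectral radius; it does not by itself give that eigenvector the range $W=\Im(M)+\sum_j\Im(N_j)$, nor does it produce a dimension bound on $W$ without further input.

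The paper's mechanism is quite different in its details. First, it does \emph{not} keep $M$ inside the CP map. Instead it uses the Takagi factorization to pass (via $T\otimes T$) to a system $M_n\otimes M_n$ where $w$ becomes $u=\sum_{i=1}^n e_i\otimes e_i$; then $R(C)=Id+\sum_j A_j\otimes\bar A_j$ with each $A_j$ skew-symmetric, so the ``$M$-part'' is the identity and only the skew pieces remain. The PPT condition $C^{t_2}\ge 0$ becomes, via $C^{t_2}=(R(C)^{t_2})F$, the statement that the real matrix $P=\bigl(Id-\sum_j(A_j'\otimes A_j'+A_j''\otimes A_j'')\bigr)F$ is PSD. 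Perron--Frobenius is applied to the \emph{real} map $L(X)=\sum_j A_j'X(A_j')^t+A_j''X(A_j'')^t$ to produce a \emph{symmetric} eigenvector at the spectral radius $\lambda$; this shows $\lambda\le 1$, hence $PF\ge 0$. From $P\ge 0$ and $PF\ge 0$ one deduces $P=PF$, and then the single explicit computation $PFu=(2-n)u$ forces $n\le 2$. This numerical ``$2-n$'' step is the crux, and nothing in your outline corresponds to it.

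Even after $n\le 2$, you are not done: this only bounds the tensor rank of $w$, not the marginal rank $r$, because the skew vectors $b_j$ (your $N_j$) could a priori spill outside $\mathrm{span}\{u_1,u_2\}$. The paper handles this by a separate PPT/kernel argument: projecting $B$ by $M\otimes M$ with $\ker M=\mathrm{span}\{v_1+iv_2\}$ (and then $v_1-iv_2$) kills the symmetric part, and PPT of the resulting purely antisymmetric matrix forces it to vanish, pinning every $b_j$ into $\mathrm{span}\{v_1\otimes v_2-v_2\otimes v_1\}$; a parallel argument treats $n=1$. Only then does $r\le 2$ follow. Your proposal bundles this into the same Perron--Frobenius step without justification, so as written the proof has a genuine gap at both stages of the $r\le 2$ claim.
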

\begin{proof} 
Let $(Id+F)\rho(Id+F)=w\overline{w}^t$ and let us prove that the tensor rank of $w$ is smaller or equal to 2.

Now, if $\rho$ is PPT then $\rho+F\rho F$ is also PPT. Notice that $B=2(\rho+F\rho F)=(Id+F)\rho (Id+F)+(Id-F)\rho(Id-F)=w\overline{w}^t+\sum_{j=1}^m b_j\overline{b_j}^t$, where $r\in \text{Sym}(\mathbb{C}^k\otimes\mathbb{C}^k) $ and $b_j\in\text{Skew-Sym}(\mathbb{C}^k\otimes\mathbb{C}^k)$, $1\leq j\leq m$.

Let $n$ be the tensor rank of $w$. Since $w\in\text{Sym}(\mathbb{C}^k\otimes\mathbb{C}^k) $ then there are linear independent vectors  $s_1,\ldots, s_n$ in $\mathbb{C}^k$ such that $w=\sum_{i=1}^ns_i\otimes s_i$.

Let $T\in M_{n\times k}(\mathbb{C})$ be such that $Ts_i=e_i$, where $e_1,\ldots,e_n$ is the canonical basis of $\mathbb{C}^n$. Notice that $C=(T\otimes T)B(T^*\otimes T^*)\in M_{n}\otimes M_n$ is also PPT and $C=uu^t+\sum_{j=1}^ma_j\overline{a_j}^t$, where $u=(T\otimes T)r=\sum_{i=1}^n e_i\otimes e_i$ and $a_j= (T\otimes T)b_j\in\text{Skew-Sym}(\mathbb{C}^n\otimes\mathbb{C}^n)$,  $1\leq j\leq m$.

Let $R:M_n\otimes M_n\rightarrow M_n\otimes M_n$ be the realignment map (definition \ref{defR}). Now, $C^{t_2}=(R(C)^{t_2})F$, by properties 2 and 3 in lemma \ref{propertiesofR}.

Now, $R(C)= Id +\sum_{j=1}^m A_j\otimes \overline{A_j}\in  M_n\otimes M_n$, where $Id=R(uu^t)$ and $A_j\otimes \overline{A_j}=V^{-1}(a_j)\otimes V^{-1}(\overline{a_j})$, by remark \ref{remarkrealignment} and by property 1 in lemma \ref{propertiesofR}. Notice that each $A_j=V^{-1}(a_j)$ is a complex skew-symmetric matrix, since $a_j\in \text{Skew-Sym}(\mathbb{C}^n\otimes\mathbb{C}^n)$. Thus, $R(C)^{t_2}= Id-\sum_{j=1}^m A_j\otimes \overline{A_j}$.

Let $A_j=A_j'+iA_j''$, where $A_j',A_j''$ are real skew-symmetric matrices in $M_n$.

Thus, $A_j\otimes \overline{A_j}=A_j'\otimes A_j'+A_j''\otimes A_j''+i(A_j'\otimes A_j''-A_j''\otimes A_j')$ and 
$C^{t_2}=(R(C)^{t_2})F= (Id-(\sum_{j=1}^mA_j'\otimes A_j'+A_j''\otimes A_j''))F-i(\sum_{j=1}^m A_j'\otimes A_j''-A_j''\otimes A_j')F$. Notice that $C^{t_2}=P+iQ$, where $P=(Id-(\sum_{j=1}^mA_j'\otimes A_j'+A_j''\otimes A_j''))F, Q=-(\sum_{j=1}^m A_j'\otimes A_j''-A_j''\otimes A_j')F$ are real matrices, because $F$ is a real matrix.

Since $C$ is PPT then $C^{t_2}$ is a positive semidefinite Hermitian matrix then $P=(Id-(\sum_{j=1}^mA_j'\otimes A_j'+A_j''\otimes A_j''))F $ is a positive semidefinite symmetric matrix.

Next, notice that $L(X)=\sum_{j=1}^mA_j'X A_j'^t+A_j''X A_j''^t$ is a positive map acting on $M_{n}$, by theorem 2.5 in \cite{evans}, there is a positive semidefinite Hermitian matrix $Y$, which is an eigenvector associated to the spectral radius of $L(X)$. Let $Y=S'+iA'$, where $S'$ is a real symmetric matrix and $A'$ is a real skew-symmetric matrix and notice that $S'\neq 0$. Notice that  the sets of symmetric and skew-symmetric matrices are left invariant by $L(X)$. Thus, $S'$ is also an eigenvector of $L(X)$ associated to the spectral radius.

Now, $V\circ L\circ V^{-1}(v)= (\sum_{j=1}^mA_j'\otimes A_j'+A_j''\otimes A_j'')v$, for every $v\in\mathbb{C}^k\otimes\mathbb{C}^k$, where $V$ is defined in definition \ref{defR}.
Therefore, there exists a symmetric tensor $V(S')=s'\in \mathbb{C}^n\otimes\mathbb{C}^n$ such that  $(\sum_{j=1}^mA_j'\otimes A_j'+A_j''\otimes A_j'')s'=\lambda s'$, where $\lambda$ is the spectral radius of this matrix. Notice that $\sum_{j=1}^mA_j'\otimes A_j'+A_j''\otimes A_j''$ is a real symmetric matrix, so the spectral radius is the biggest eigenvalue. 

So $Ps'=(Id-(\sum_{j=1}^mA_j'\otimes A_j'+A_j''\otimes A_j''))Fs'=(1-\lambda)s'$. Thus, the biggest eigenvalue of  $\sum_{j=1}^mA_j'\otimes A_j'+A_j''\otimes A_j''$ is smaller or equal to 1 and $PF=Id-(\sum_{j=1}^mA_j'\otimes A_j'+A_j''\otimes A_j'')$ is also positive semidefinite.

Since $P$ and $PF$ are positive semidefinite then $P=PF$. By properties 2 and 4 in lemma \ref{propertiesofR}, we have $(PF)^{t_2}=P^{t_2}=R(PF)F=R((PF)^{t_2})$. Therefore,  $Id+\sum_{j=1}^mA_j'\otimes A_j'+A_j''\otimes A_j''=(PF)^{t_2}=R((PF)^{t_2})=uu^t+\sum_{j=1}^m (a_j')(a_j')^t+(a_j'')(a_j'')^t$, where $R(Id)=uu^t$, $R(A_j'\otimes A_j')=V(A_j')V(A_j')^t=(a_j')(a_j')^t, R(A_j''\otimes A_j'')=V(A_j'')V(A_j'')^t=(a_j'')(a_j'')^t$. Notice that $V(A_j')=a_j'\in \text{Skew-Sym}(\mathbb{C}^n\otimes\mathbb{C}^n)$ and $V(A_j'')=a_j''\in \text{Skew-Sym}(\mathbb{C}^n\otimes\mathbb{C}^n)$, for every $j$.

Thus, $2Id-uu^t-(\sum_{j=1}^m (a_j')(a_j')^t+(a_j'')(a_j'')^t)=Id-(\sum_{j=1}^mA_j'\otimes A_j'+A_j''\otimes A_j'')=PF$.

Finally,  since $a_j',a_j''\in \text{Skew-Sym}(\mathbb{C}^n\otimes\mathbb{C}^n)$ then $(a_j')^tu=(a_j'')^tu=0$ and $PFu=(2-u^tu)u$. Now, $u^tu=n$ and $n$ is the tensor rank of $w$. Since $PF$ is a positive semidefinite symmetric matrix then the tensor rank of $w$ is smaller or equal to $2$.

Recall that  $B=w\overline{w}^t+\sum_{j=1}^m b_j\overline{b_j}^t$ is PPT, $w\in \text{Sym}(\mathbb{C}^k\otimes\mathbb{C}^k) $ and $b_j\in\text{Skew-Sym}(\mathbb{C}^k\otimes\mathbb{C}^k)$, $1\leq j\leq m$.

Now, if the tensor rank of $w$ is  2 then $r=v_1\otimes v_1+v_2\otimes v_2$, such that $v_1$ and $v_2$ are linear independent. Let $M\in M_k$ be such that $\ker(M)=\text{span}\{v_1+iv_2\}$. Notice that $(M\otimes M)w=0$. 

Next, $(M\otimes M)B(M^*\otimes M^*)=\sum_{j=1}^m c_j\overline{c_j}^t$, where $c_j=(M\otimes M)b_j\in\text{Skew-Sym}(\mathbb{C}^k\otimes \mathbb{C}^k)$. 

Notice that $(M\otimes M)B(M^*\otimes M^*)$ is PPT, therefore $0\leq tr(((M\otimes M)B(M^*\otimes M^*))^{t_2}v\overline{v}^t )=tr((\sum_{j=1}^m c_j\overline{c_j}^t)^{t_2}vv^t)=tr(\sum_{j=1}^m c_j\overline{c_j}^t(vv^t)^{t_2})$, for every $v\in\mathbb{C}^k\otimes \mathbb{C}^k$. If we choose $v_0=\sum_{i=1}^k f_i\otimes f_i$, where $\{f_1,\ldots,f_k\} $ is the canonical basis of $\mathbb{C}^k$ then $(v_0\overline{v_0}^t)^{t_2}=F$. So $0\leq tr((\sum_{j=1}^m c_j\overline{c_j}^t)^{t_2}v_0\overline{v_0}^t)=tr(\sum_{j=1}^m c_j\overline{c_j}^tF)=-tr(\sum_{j=1}^m c_j\overline{c_j}^t)\leq 0$, since $\overline{c_j}^tF=-\overline{c_j}^t$. Thus, $tr(\sum_{j=1}^m c_j\overline{c_j}^t)=0$ and every $c_j=0$. 

Thus, every $b_j\in\ker(M\otimes M)\cap \text{Skew-Sym}(\mathbb{C}^k\otimes \mathbb{C}^k)=\text{Skew-Sym}((v_1+iv_2)\otimes \mathbb{C}^k+ \mathbb{C}^k\otimes (v_1+iv_2)) $.

Now, let  $M'\in M_k$ be such that $\ker(M')=\text{span}\{v_1-iv_2\}$. Notice that $(M'\otimes M')w=0$.
We can repeat the argument above using $M'$ instead of $M$. So  $b_j\in\text{Skew-Sym}((v_1-iv_2)\otimes \mathbb{C}^k+ \mathbb{C}^k\otimes (v_1-iv_2)) $.

Hence,  every $b_j\in\text{Skew-Sym}((v_1+iv_2)\otimes \mathbb{C}^k+ \mathbb{C}^k\otimes (v_1+iv_2)) \cap \text{Skew-Sym}((v_1-iv_2)\otimes \mathbb{C}^k+ \mathbb{C}^k\otimes (v_1-iv_2))=\text{span}\{(v_1+iv_2)\otimes (v_1-iv_2) - (v_1-iv_2)\otimes (v_1+iv_2) \}=\text{span}\{v_1\otimes v_2 - v_2\otimes v_1 \}$.
Therefore, $(Id-F)\rho (I-F)=\sum_{j=1}^mb_j\overline{b_j}^t=s\overline{s}^t$, where $s\in  \text{span}\{v_1\otimes v_2 - v_2\otimes v_1 \}$.

If the tensor rank of $w$ is 1 then $w=v_1\otimes v_1$. Let $N\in M_k$ be such that $\ker(N)=\text{span}\{v_1\}$. Notice that $(N\otimes N)r=0$. We can repeat the argument above to conclude that $b_j\in\ker(N\otimes N)\cap \text{Skew-Sym}(\mathbb{C}^k\otimes \mathbb{C}^k)=\text{Skew-Sym}(v_1\otimes \mathbb{C}^k+ \mathbb{C}^k\otimes v_1) $.

Next, if there is $l\in\{1,\ldots,m\}$ such that $0\neq b_l$ then there exists $v_3\in\mathbb{C}^k$ such that $b_l=v_1\otimes v_3-v_3\otimes v_1$ and $v_1,v_3$ are linear independent.
Let $O\in M_{2\times k}$ be such that $Ov_1=g_1$, $Ov_3=g_2$,  where $\{g_1,g_2\} $ is the canonical basis of $\mathbb{C}^2$. Thus,  $(O\otimes O)r=g_1\otimes g_1$,  $(O\otimes O)b_l=g_1\otimes g_2-g_2\otimes g_1$ and $(O\otimes O)b_j\in\text{Skew-Sym}(\mathbb{C}^2\otimes \mathbb{C}^2)=\text{span}\{g_1\otimes g_2-g_2\otimes g_1\}$, for $1\leq j\leq m$. Therefore, the image of $(O\otimes O)B(O^*\otimes O^*)$ is generated by  $g_1\otimes g_1$ and $g_1\otimes g_2-g_2\otimes g_1$.  So the only  tensor with tensor rank 1 in this image is $g_1\otimes g_1$ and $(O\otimes O)B(O^*\otimes O^*)\in M_2\otimes M_2$ is not separable by the range criterion (see \cite{pawel}). This is a contradiction, since $(O\otimes O)B(O^*\otimes O^*)$ is PPT and in $M_2\otimes M_2$ every PPT matrix is separable (see \cite{horodecki}). Therefore, every $b_j=0$ and $(Id-F)A(Id-F)=0$.

Finally, in both cases $(B=w\overline{w}^t+s\overline{s}^t\text{ or }B=v_1\overline{v_1}^t\otimes v_1\overline{v_1}^t)$ the ranges of the marginal matrices of $B=2(\rho+F\rho F)$ are subspaces of $\text{span}\{v_1,v_2\}$. Thus, the marginal rank of $\rho+F\rho F$ (its marginal matrices are equal) is smaller or equal to 2. Hence, the marginal ranks of $\rho$ are also smaller or equal to 2. Since $\rho$ is PPT then $\rho$ is separable, by Horodecki theorem (see \cite{horodecki}).
\end{proof}




\begin{corollary}\label{corollaryM3} Let $\rho\in M_k\otimes M_k\simeq M_{k^2}$ be a positive semidefinite hermitian matrix, $Id\in M_k\otimes M_k\simeq M_{k^2}$  the identity and $F\in M_k\otimes M_k$  the flip operator.  If $\rho$ is positive under partial transposition,  $r$ is the marginal rank of $\rho+F \rho F$ and $r\leq3$ then 
$\text{rank }(Id+F)\rho(Id+F)\geq \max\{\frac{2}{r}\text{rank }(Id-F)\rho(Id-F), \frac{r}{2}\}$.
\end{corollary}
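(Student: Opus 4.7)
The plan is to reduce the corollary to one structural fact about PPT matrices, and then dispatch the cases $r\in\{0,1,2,3\}$ by combining this fact with Theorem~\ref{theoremrank1} and an easy dimension count. The structural fact to be established first is: \emph{if $\rho\in M_{k}\otimes M_{k}$ is PPT and $(Id+F)\rho(Id+F)=0$, then $\rho=0$.}

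To prove the structural fact, I would begin by noting $(Id+F)\rho(Id+F)=((Id+F)\sqrt{\rho})((Id+F)\sqrt{\rho})^{*}$, so the vanishing hypothesis forces $(Id+F)\sqrt{\rho}=0$, i.e.\ $\Im(\rho)\subset\ker(Id+F)=\text{Skew-Sym}(\mathbb{C}^{k}\otimes\mathbb{C}^{k})$. Writing $\rho=\sum_{i}\lambda_{i}v_{i}\overline{v_{i}}^{t}$ with each $v_{i}$ antisymmetric, the relation $Fv_{i}=-v_{i}$ and its adjoint yield $F\rho=\rho F=-\rho$, hence $\text{tr}(\rho F)=-\text{tr}(\rho)$. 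On the other hand, the identity $\text{tr}(AB)=\text{tr}(A^{t_{2}}B^{t_{2}})$ together with $F^{t_{2}}=uu^{t}$ (Remark~\ref{remarkrealignment}) gives $\text{tr}(\rho F)=u^{t}\rho^{t_{2}}u\geq 0$ from the PPT hypothesis $\rho^{t_{2}}\geq 0$. The two computations force $\text{tr}(\rho)\leq 0$, and $\rho\geq 0$ then gives $\rho=0$.

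This structural fact immediately yields $\text{rank}(Id+F)\rho(Id+F)\geq 1$ for every nonzero PPT $\rho$; and when $r\geq 3$, Theorem~\ref{theoremrank1} rules out the value $1$ (which would force $r\leq 2$), so in fact $\text{rank}(Id+F)\rho(Id+F)\geq 2$. I would also use the crude bound $\text{rank}(Id-F)\rho(Id-F)\leq\binom{r}{2}$, which follows from Remark~\ref{remarklocalmatrices}: $\Im(\rho^{A}),\Im(\rho^{B})\subset\Im(\sigma)=V$ with $\dim V=r$, so $\Im(\rho)\subset V\otimes V$ and therefore $\Im((Id-F)\rho(Id-F))\subset\text{Skew-Sym}(V\otimes V)$.

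A case analysis on $r$ now finishes the proof. For $r=0$, $\rho+F\rho F=0$ and $\rho\geq 0$ force $\rho=0$ by taking trace. For $r=1$, $\binom{1}{2}=0$ kills $(Id-F)\rho(Id-F)$, leaving $\text{rank}(Id+F)\rho(Id+F)\geq 1$, which is the structural fact. For $r=2$, $(2/r)\binom{r}{2}=1=r/2$, matched by $\text{rank}(Id+F)\rho(Id+F)\geq 1$. For $r=3$, $(2/r)\binom{r}{2}=2$ and $r/2=3/2$ are both at most $2$, matched exactly by the sharpened bound $\text{rank}(Id+F)\rho(Id+F)\geq 2$ coming from the structural fact together with Theorem~\ref{theoremrank1}. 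The only real work is in the structural fact; the $r\leq 3$ bookkeeping is immediate afterwards, and I do not anticipate any obstacle beyond the trace computation producing the sign contradiction.
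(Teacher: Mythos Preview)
Your proof is correct. The paper's argument and yours coincide for $r=3$ (both combine Theorem~\ref{theoremrank1} with the dimension bound $\text{rank}(Id-F)\rho(Id-F)\leq\binom{r}{2}$), but they diverge for $r\leq 2$. The paper observes that $r\leq 2$ forces the marginal ranks of $\rho$ itself to be at most $2$, invokes Horodecki's theorem to conclude that $\rho$ is separable, and then appeals to Theorem~\ref{theoreminequality} (the main inequality for separable states). Your route instead proves directly, via the trace computation $\text{tr}(\rho F)=\overline{u}^{t}\rho^{t_2}u\geq 0$ versus $\text{tr}(\rho F)=-\text{tr}(\rho)$, that a nonzero PPT matrix cannot satisfy $(Id+F)\rho(Id+F)=0$, and this single rank lower bound already handles $r\in\{1,2\}$. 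Your argument is thus more self-contained and elementary: it avoids both Horodecki's theorem and the Section~1 machinery underlying Theorem~\ref{theoreminequality}. The paper's route, on the other hand, establishes along the way the stronger conclusion that $\rho$ is in fact separable whenever $r\leq 2$, which is of independent interest but not needed for the bare inequality.
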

\begin{proof}  If $r\leq 2$ then the marginal ranks of $\rho$ are also smaller or equal to 2. Since $\rho$ is PPT then $\rho$ is separable, by Horodecki theorem (see \cite{horodecki}). The theorem follows by theorem \ref{theoreminequality}.

Now, if $r=3$ then $\text{rank }(Id+F)\rho(Id+F)\geq 2$, by theorem \ref{theoremrank1}. 
Since $\text{rank}((Id-F)\rho (Id-F))=\text{rank}((Id-F)(\rho+F\rho F) (Id-F))\leq \dim(\text{Skew-Sym}(\mathbb{C}^3\otimes\mathbb{C}^3)) =\frac{3\times 2}{2}=3$ then  $\text{rank}((Id+F)\rho (Id+F))\geq \max\{\frac{2}{3}\text{rank}((Id-F)\rho (Id-F)), \frac{3}{2}\}$.
\end{proof}

\section{SPC matrices}

Let $r$ be the marginal rank of $\rho+F\rho F$.
Here, we provide several examples of PPT matrices $\rho\in M_k\otimes M_k$ such that $\text{rank}((Id+F)\rho (Id+F))\geq r \geq\frac{2}{r-1}\text{rank}((Id-F)\rho (Id-F))$ (see corollary \ref{corollaryPPTSPC}). So it is not trivial to find PPT entanglement in the gap discussed in the begining of the last section. 
The main result of this section is the following: If $\rho$ is PPT and $\rho+F\rho F$ is symmetric with positive coefficients (definition \ref{defSPC}) then $\text{rank}((Id+F)\rho (Id+F))\geq r \geq\frac{2}{r-1}\text{rank}((Id-F)\rho (Id-F))$ (theorem \ref{theoreminequalitySPC}).

\begin{definition}\label{HermitianSchmidtdecomposition} A decomposition of a matrix  $A\in M_{k}\otimes M_{m}$,
$\sum_{i=1}^n \lambda_i \gamma_i\otimes \delta_i$,
is a Schmidt decomposition if $\{\gamma_i|\ 1\leq i\leq n\}\subset M_k$,\ $\{\delta_i|\ 1\leq i\leq n\}\subset M_m$ are orthonormal sets with respect to the trace inner product,  $\lambda_i\in\mathbb{R}$ and $\lambda_i>0$.
Also, if $\gamma_i$ and $\delta_i$ are Hermitian matrices for every $i$, then $\sum_{i=1}^n \lambda_i \gamma_i\otimes \delta_i$ is a Hermitian Schmidt decomposition of $A$. 
\end{definition}

\begin{definition}\label{defSPC}$($\textbf{SPC matrices}$)$ Let $A\in M_k\otimes M_k\simeq M_{k^2}$ be a positive semidefinite Hermitian matrix. We say that $A$ is symmetric with positive coefficients or simply SPC, if $A$ has the following symmetric Hermitian Schmidt decomposition with positive coefficients: $\sum_{i=1}^n\lambda_i\gamma_i\otimes\gamma_i$, with $\lambda_i>0$, for every $i$.
\end{definition}

\begin{remark}\label{remarkSPC} The following description of SPC matrices can be found  in \cite[Corollary 25]{cariello}$:$ $A\in M_k\otimes M_k\simeq M_{k^2}$ is SPC if and only if $A$ and $R(A^{t_2})$ are positive semidefinite Hermitian matrices. 
\end{remark}

\begin{theorem}\label{theoreminequalitySPC}
If $\rho\in M_k\otimes M_k\simeq M_{k^2}$ is a PPT matrix, $\rho+F\rho F$ is a SPC matrix and $r$ is the marginal rank of $\rho+F\rho F$ then $(Id+F)\rho (Id+F)$ is also a PPT matrix and $\text{rank}((Id+F)\rho (Id+F))\geq r\geq  \frac{2}{r-1}\text{rank}((Id-F)\rho (Id-F))$. 
\end{theorem}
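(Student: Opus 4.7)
The plan is to set $P := (Id+F)\rho(Id+F)$, $Q := (Id-F)\rho(Id-F)$, $B := P + Q = 2(\rho + F\rho F)$, and $\sigma := (\rho + F\rho F)^{A} = (\rho + F\rho F)^{B}$ (of rank $r$), and to exploit the two identities $PF = P$ and $QF = -Q$ together with the realignment calculus of Lemma~\ref{propertiesofR}. The argument then reduces the three assertions of the theorem to a single application of Smolin's theorem \cite[Theorem 1]{smolin} together with a dimension count inside $\text{Skew-Sym}(W \otimes W)$, where $W := \Im(\sigma)$.

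First I would establish that $P$ is PPT. From $PF = P$ and $QF = -Q$ one gets $BF = P - Q$, and property (5) of Lemma~\ref{propertiesofR} then yields $R(B^{t_2}) = (BF)^{t_2} = (P - Q)^{t_2}$. Since $\rho$ is PPT, so is $B$ (using $(F\rho F)^{t_2} = F \rho^{t_1} F \geq 0$), hence $(P+Q)^{t_2} \geq 0$. The SPC hypothesis on $\rho + F\rho F$, via Remark~\ref{remarkSPC}, yields $R(B^{t_2}) \geq 0$, i.e.\ $(P - Q)^{t_2} \geq 0$. Adding the two nonnegative matrices gives $2 P^{t_2} \geq 0$, so $P$ is PPT.

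For the bound $\text{rank}(P) \geq r$: since $(P-Q)^{t_2}$ is positive semidefinite and the $A$-marginal is unaffected by a partial transposition on the second factor, the $A$-marginal of $(P-Q)^{t_2}$ equals $P^A - Q^A$ and is therefore positive semidefinite. Combined with $P^A + Q^A = B^A = 2\sigma$ this yields $P^A \geq \sigma$, so $\Im(P^A) \supset \Im(\sigma) = W$. On the other hand, Remark~\ref{remarklocalmatrices} gives $\Im(P) \subset \Im(B) \subset W \otimes W$, whence $\Im(P^A) \subset W$. Therefore $\Im(P^A) = W$ and $\text{rank}(P^A) = r$. Applying Smolin's bound to the PPT matrix $P$ gives $\text{rank}(P) \geq \text{rank}(P^A) = r$.

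Finally, $\Im(Q) \subset \text{Skew-Sym}(\mathbb{C}^k \otimes \mathbb{C}^k) \cap (W \otimes W) = \text{Skew-Sym}(W \otimes W)$, a space of dimension $r(r-1)/2$, so $\text{rank}(Q) \leq r(r-1)/2$, which rearranges to $r \geq \frac{2}{r-1} \text{rank}(Q)$. The step I expect to require the most care is the identity $R(B^{t_2}) = (P-Q)^{t_2}$ and its upgrade to the marginal inequality $P^A \geq \sigma$; once this bridge between the SPC hypothesis and the classical PSD/partial-trace machinery is in place, Smolin's theorem and the skew-symmetric dimension count finish the proof routinely.
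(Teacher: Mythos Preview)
Your proof is correct and follows the same overall architecture as the paper's: establish that $(Id+F)\rho(Id+F)$ is PPT via the identity $2\xi^{t_2}=C^{t_2}+(CF)^{t_2}=C^{t_2}+R(C^{t_2})$ (your $P$, $B$ are the paper's $2\xi$, $2C$), then invoke \cite[Theorem~1]{smolin} for the lower bound on $\text{rank}(P)$, and finish with the dimension count $\text{rank}(Q)\le\dim\text{Skew-Sym}(W\otimes W)=\binom{r}{2}$.

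The one genuinely different step is how you obtain $\text{rank}(P^A)=r$. The paper argues directly from the SPC decomposition $C=\sum_i\lambda_i\gamma_i\otimes\gamma_i$: if $v\in\ker(\xi^A)$ then $\operatorname{tr}(\xi(v\overline v^{\,t}\otimes v\overline v^{\,t}))=0$, and since $v\otimes v$ is symmetric also $\operatorname{tr}(\eta(v\overline v^{\,t}\otimes v\overline v^{\,t}))=0$, whence $\sum_i\lambda_i\operatorname{tr}(\gamma_i v\overline v^{\,t})^2=0$, forcing $\operatorname{tr}(C^Av\overline v^{\,t})=0$. You instead recycle the already-established inequality $(P-Q)^{t_2}\ge 0$: taking the $A$-marginal (unchanged by $(\cdot)^{t_2}$) gives $P^A\ge Q^A$, and together with $P^A+Q^A=2\sigma$ this yields $P^A\ge\sigma$, so $\Im(P^A)\supset W$; the reverse inclusion comes from $\Im(P)\subset\Im(B)\subset W\otimes W$. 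Your route is a bit cleaner in that it uses the SPC hypothesis only once (through Remark~\ref{remarkSPC}) rather than invoking the explicit decomposition a second time; the paper's route, on the other hand, makes no use of the partial-trace-commutes-with-$t_2$ observation and is closer in spirit to the Perron--Frobenius style arguments elsewhere in the paper.
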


\begin{proof}
Since $\rho$ is a positive semidefinite Hermitian matrix then $F\rho F$ and $\rho+F\rho F$ are too.
Let $C=\rho+F\rho F$.

Notice that $C=\frac{1}{2}(Id+F)\rho (Id+F)+\frac{1}{2}(Id-F)\rho (Id-F)$. Let $\xi=\frac{1}{2}(Id+F)\rho (Id+F)$ and $\eta=\frac{1}{2}(Id-F)\rho (Id-F)$. Observe that $\xi$ and $\eta$ are positive semidefinite Hermitian matrices.

Now, $F\xi F=\xi$, $F\eta F=\eta$, therefore  $\xi^B=(F\xi F)^A=\xi^A$ and $\eta^B=(F\eta F)^A=\eta^A$, by remark \ref{remarklocalmatrices}. 
Thus,  $\xi^A+\eta^A=C^A=C^B=\rho^A+\rho^B$.

Observe that if $v\in \ker(C^A)$ then $v\in \ker(\xi^A)$, since $\xi^A,\eta^A$ are positive semidefinite  (see remark \ref{remarklocalmatrices}). Therefore, $\text{rank}(\xi^A)\leq \text{rank}(C^A)$.

Next, if $v\in \ker(\xi^A)$ then $0=tr(\xi^Av\overline{v}^t)=tr(\xi (v\overline{v}^t\otimes Id))$. Thus, $tr(\xi (v\overline{v}^t\otimes v\overline{v}^t))=0$, since 
$\xi$ is positive semidefinite.

Since $v\otimes v\in\text{Sym}(\mathbb{C}^k\otimes\mathbb{C}^k)\subset\ker(\eta)$ then $tr(\eta(v\overline{v}^t\otimes v\overline{v}^t))=0$. So $tr(C(v\overline{v}^t\otimes v\overline{v}^t))=0$.

By hypothesis, $C$ is a SPC matrix, therefore $C=\sum_{i=1}^m\lambda_i \gamma_i\otimes\gamma_i$, where $\gamma_i$ is Hermitian and $\lambda_i>0$, for $1\leq i\leq m$ (see definition \ref{defSPC}). Thus, $\sum_{i=1}^m\lambda_i tr(\gamma_iv\overline{v}^t)^2=0$ and $tr(\gamma_iv\overline{v}^t)=0$ , for $1\leq i\leq m$. 

Since $C^A$ is positive semidefinite, $C^A=\sum_{i=1}^m\lambda_i tr(\gamma_i)\gamma_i$ and $tr(C^Av\overline{v}^t)=0$ then $v\in\ker(C^A)$. Therefore, $\text{rank}(C^A)\leq\text{rank}(\xi^A)$ and $\text{rank}(C^A)=\text{rank}(\xi^A)$.

Now, let us prove that $\xi$ is PPT.  Since $\xi F=\xi$ and  $\eta F=-\eta$   then  $2\xi=C+CF$.
Thus, $2\xi^{t_2}=C^{t_2}+(CF)^{t_2}=C^{t_2}+R(C^{t_2})$, by item 5 in lemma \ref{propertiesofR}. Now, $C^{t_2}$ is positive semidefinite, since $\rho$ and $F\rho F$ are PPT, and $R(C^{t_2})$ is positive semidefinite because $C$ is SPC, by remark \ref{remarkSPC}.

Finally, by  \cite[Theorem 1]{smolin}, since $\xi$ is PPT then $\text{rank}(\xi)\geq\text{rank}(\xi^A)=\text{rank}(C^A)=r$.
By remark \ref{remarklocalmatrices}, $\Im(C)\subset\Im(C^A)\otimes \Im(C^B)=\Im(C^A)\otimes \Im(C^A)$. Thus, $\Im(\eta)\subset \Im(C)\cap \text{Skew-Sym}(\mathbb{C}^k\otimes\mathbb{C}^k)\subset \text{Skew-Sym}(\Im(C^A)\otimes \Im(C^A))$. Therefore, $\text{rank}(\eta)\leq \frac{r(r-1)}{2}$ and $\text{rank}(\xi)\geq r\geq \frac{2}{r-1} \text{rank}(\eta)$.
\end{proof}

\begin{remark} The next two examples show that the hypothesis, $\rho+F \rho F$ is SPC, cannot be dropped in theorem \ref{theoreminequalitySPC}. The first example is 
the separable matrix $\rho\in M_k\otimes M_k$ of corollary \ref{corollaryexample}, which satisfies $\text{rank}((Id+F)\rho (Id+F))=\frac{2}{k}\text{rank}((Id-F)\rho (Id-F))<\frac{2}{k-1}\text{rank}((Id-F)\rho (Id-F))$. 
The second example  is the matrix $\rho=B+C$, where $B=k(\sum_{i=1}^k e_ie_i^t\otimes e_ie_i^t)-uu^t$,  $u=\sum_{i=1}^ke_i\otimes e_i$, $\{e_1,\ldots, e_k\}$ is the canonical basis of $\mathbb{C}^k$ and $C=Id-F$. This matrix $\rho$ is a positive semidefinite Hermitian matrix and invariant under partial transposition, since the partial tranposition of $F$ is $uu^t$, and vice versa, therefore $\rho$ is PPT. Notice that $\text{rank}((Id+F)\rho (Id+F))=\text{rank}(B)=k-1$ and $\text{rank}((Id-F)\rho (Id-F))=\text{rank}(C)=\frac{k(k-1)}{2}$. Thus, $\text{rank}((Id+F)\rho (Id+F))=\frac{2}{k}\text{rank}((Id-F)\rho (Id-F))<\frac{2}{k-1}\text{rank}((Id-F)\rho (Id-F))$.

\end{remark}

\begin{corollary} \label{corollaryPPTSPC}Let $\rho\in M_k\otimes M_k\simeq M_{k^2}$ be a PPT and SPC matrix then $\text{rank}((Id+F)\rho (Id+F))\geq r \geq \frac{2}{r-1}\text{rank}((Id-F)\rho (Id-F))$, where $r$ is the marginal rank of $\rho+F\rho F$.
\end{corollary}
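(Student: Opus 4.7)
The plan is to reduce this corollary to Theorem~\ref{theoreminequalitySPC} by verifying that the extra hypothesis in that theorem, namely that $\rho+F\rho F$ be SPC, is automatic once $\rho$ itself is assumed to be SPC.

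The key observation I would use is that the SPC property makes $\rho$ invariant under conjugation by the flip. Indeed, if $\rho=\sum_{i=1}^n\lambda_i\gamma_i\otimes\gamma_i$ is the symmetric Hermitian Schmidt decomposition from Definition~\ref{defSPC}, then applying $F$ on both sides swaps the two tensor factors of each summand, giving $F(\gamma_i\otimes\gamma_i)F=\gamma_i\otimes\gamma_i$. Summing in $i$, this yields $F\rho F=\rho$, so $\rho+F\rho F=2\rho$.

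Consequently $\rho+F\rho F=2\rho=\sum_{i=1}^n(2\lambda_i)\gamma_i\otimes\gamma_i$ is itself SPC, with the same Schmidt vectors and positive coefficients scaled by $2$. In particular, $\rho$ is PPT and $\rho+F\rho F$ is SPC, so every hypothesis of Theorem~\ref{theoreminequalitySPC} is satisfied. Invoking that theorem directly with the same value of $r$ (the marginal rank of $\rho+F\rho F$) produces the desired chain of inequalities $\text{rank}((Id+F)\rho(Id+F))\geq r\geq \frac{2}{r-1}\text{rank}((Id-F)\rho(Id-F))$, completing the proof.

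There is essentially no obstacle here: the entire content of the corollary is the elementary fact that an SPC matrix is flip-invariant, which is immediate from the form of its Schmidt decomposition. The corollary exists mainly to package Theorem~\ref{theoreminequalitySPC} in the form that is most natural for producing the large family of non-trivial PPT examples referenced in the introduction, where one starts from SPC matrices rather than from matrices whose symmetrization under $F$ happens to be SPC.
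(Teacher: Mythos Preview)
Your proof is correct and follows exactly the same approach as the paper: you observe that the symmetric Schmidt decomposition $\rho=\sum_i\lambda_i\gamma_i\otimes\gamma_i$ forces $F\rho F=\rho$, so that $\rho+F\rho F=2\rho$ is again SPC, and then invoke Theorem~\ref{theoreminequalitySPC}. There is no meaningful difference between your argument and the paper's.
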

\begin{proof} Since $\rho$ is SPC then $\rho=\sum_{i=1}^m\lambda_i \gamma_i\otimes\gamma_i$, by definition \ref{defSPC}. Thus, $F\rho F=\rho$ and $\rho+F\rho F=2\rho$ is SPC. Now, use theorem \ref{theoreminequalitySPC}.
\end{proof}

\vspace{12pt}
\textbf{Acknowledgement.}
The author would like to thank Professor Otfried G\"uhne for useful discussion.

\begin{bibdiv}
\begin{biblist}

\bib{peres}{article}{
   author={Peres, A.},
   title={Separability criterion for density matrices},
   journal={Phys. Rev. Lett.},
   volume={77},
   date={1996},
   pages={1413-1415},
  
}

\bib{pawel}{article}{
   author={Horodecki, Pawel},
   title={Separability criterion and inseparable mixed states with positive partial transposition},
   journal={Physics Letters A},
   volume={232.5},
   date={1997},
   pages={333-339}
  
}

\bib{chen}{article}{
   author={K. Chen}
   author={L.-A. Wu}
   title={A matrix realignment method for recognizing entanglement},
   journal={Quantum Inf. Comput.},
   volume={3},
   date={2003},
   pages={193--202}
}

\bib{rudolph2}{article}{
   author={Rudolph, O.}
   title={Computable Cross-norm Criterion for Separability},
   journal={Lett. Math. Phys.},
   volume={70},
   date={2005},
   pages={57--64}
}

\bib{rudolph}{article}{
   author={Rudolph, O.}
   title={Further results on the cross norm criterion for separability},
   journal={Quantum Inf. Proc.},
   volume={4},
   date={2005},
   pages={219--239}
}

\bib{bennett}{article}{
   author={Bennett, C.H.}
   author={DiVincenzo, D.P.}
   author={Mor, T.}
   author={Shor, P.W.}
   author={Smolin, J.A.}
   author={Terhal, B.M.}
   title={Unextendible product bases and bound entanglement},
   journal={Phys. Rev. Lett.},
   volume={82},
   date={1999},
   pages={5385–-5388}
}

\bib{smolin}{article}{
   author={Horodecki, Pawel},
   author={Smolin, John A.},
   author={Terhal, B.T.}
   author={Thapliyal, Ashish V.}
   title={Rank two bipartite bound entangled states do not exist},
   journal={Theoretical Computer Science},
   volume={292},
   date={2003},
   pages={589--596},
}

\bib{guhne}{article}{
   author={T\'oth, G.}
   author={G\"uhne, O.},
   title={Separability criteria and entanglement witnesses for symmetric quantum states},
   journal={Applied Physics B},
   volume={98},
   date={2010},
   number={4},
   pages={617-22},
  
}

\bib{cariello1}{article}{
  title={Separability for weakly irreducible matrices},
  author={Cariello, D.},
  journal={Quantum Inf.  Comput.},
  volume={14},
  number={15-16},
  pages={1308--1337},
  year={2014}
}

\bib{carielloSPC}{article}{
   author={Cariello, D.},
   title={Does symmetry imply PPT property?},
   journal={Quantum Inf. Comput.},
   volume={15},
   date={2015},
   number={9-10},
   pages={812--824},
  
}

\bib{cariello}{article}{
   author={Cariello, D.},
   title={Completely Reducible Maps in Quantum Information Theory},
   journal={IEEE Transactions on Information Theory},
   volume={62},
   date={2016},
   number={4},
   pages={1721-1732},
  
}

\bib{evans}{article}{
  title={Spectral properties of positive maps on C*-algebras},
  author={Evans, David E.}
  author={H{\o}egh-Krohn, Raphael},
  journal={Journal of the London Mathematical Society},
  volume={2},
  number={2},
  pages={345--355},
  year={1978},
  publisher={Oxford University Press}
}

\bib{horodecki}{article}{
   author={Horodecki, Micha{\l}},
   author={Horodecki, Pawe{\l}},
   author={Horodecki, Ryszard},
   title={Separability of mixed states: necessary and sufficient conditions},
   journal={Phys. Lett. A},
   volume={223},
   date={1996},
   number={1-2},
   pages={1--8},
}

\end{biblist}
\end{bibdiv}

\end{document}